\newcommand{\erclogowrapped}[1]{%
\setlength\intextsep{0pt}%
\begin{wrapfigure}[3]{r}{#1*\real{1.1}}%
\includegraphics[width=#1]{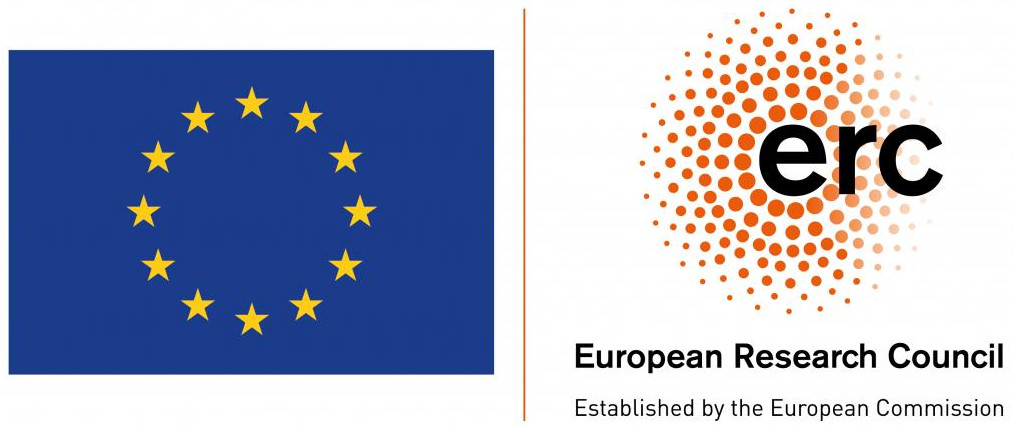}%
\end{wrapfigure}%
}
\newtheorem{theorem}{Theorem}[section]
\newtheorem{lemma}[theorem]{Lemma}
\newtheorem{corollary}[theorem]{Corollary}
\newtheorem{definition}[theorem]{Definition}
\newtheorem{fact}[theorem]{Fact}
\newcommand{\Alg}{\mathcal{A}}
\newcommand{\Lap}{\mathrm{Lap}}
\newcommand{\densLap}[1]{f_{\mathrm{Lap}}(#1)}
\newcommand{\range}{\mathrm{range}}
\newcommand{\randvar}{Z}
\newcommand{\support}{\mathrm{support}}
\newcommand{\paren}[1]{\left\{ #1 \right\}}
\newcommand{\citet}{\cite}
\newcommand{\citep}{\cite}
\renewcommand{\epsilon}{\varepsilon}
\title{Continual Counting with Gradual Privacy Expiration}
\author{%
  Joel Daniel Andersson\\
  Basic Algorithms Research Copenhagen\\
  University of Copenhagen\\
  \texttt{jda@di.ku.dk}\\
  \And
  Monika Henzinger\\
  Institute of Science and Technology, Austria\\
  \texttt{monika.henzinger@ist.ac.at}\\
  \And
  Rasmus Pagh\\
  Basic Algorithms Research Copenhagen\\
  University of Copenhagen\\
  \texttt{pagh@di.ku.dk}\\
  \And
  Teresa Anna Steiner\\
  Technical University of Denmark\\
  \texttt{terst@dtu.dk}\\
  \And
  Jalaj Upadhyay\\
  Rutgers University\\
  \texttt{jalaj.upadhyay@rutgers.edu} \\
}
\begin{document}

\maketitle

\begin{abstract}
Differential privacy with gradual expiration models the setting where data items arrive in a stream and at a given time $t$ the privacy loss guaranteed for a data item seen at time $(t-d)$ is $\epsilon g(d)$, where $g$ is a monotonically non-decreasing function. We study the fundamental {\em continual (binary) counting} problem where each data item consists of a bit, and the algorithm needs to output at each time step the sum of all the bits streamed so far. For a stream of length $T$ and privacy {\em without} expiration continual counting is possible with maximum  (over all time steps) additive error $O(\log^2(T)/\varepsilon)$ and the best known lower bound is $\Omega(\log(T)/\varepsilon)$; closing this gap is a challenging open problem. 

We show that the situation is very different for privacy with gradual expiration by giving upper and lower bounds for a large set of expiration functions $g$. Specifically, our algorithm achieves an additive error of $  O(\log(T)/\epsilon)$ for a large set of privacy expiration functions. We also give a lower bound that shows that if $C$ is the additive error of any $\epsilon$-DP algorithm for this problem, then the product of $C$ and the privacy expiration function after $2C$ steps must be $\Omega(\log(T)/\epsilon)$. Our algorithm matches this lower bound as its additive error is $O(\log(T)/\epsilon)$, even when $g(2C) = O(1)$.

Our empirical evaluation shows that we achieve a slowly growing privacy loss with significantly smaller empirical privacy loss for large values of~$d$ than a natural baseline algorithm.

\end{abstract}

\section{Introduction}\label{sec:intro}
{\em Differential privacy under continual observation}~\cite{chan2011private,dwork2010differentially} has seen a renewed interest recently~\cite{andersson2023smooth,henzinger2022constant,henzinger2023almost,henzinger2024unifying,honaker2015efficient} due to its application in private learning~\cite{choquette2023correlated,choquette2022multi,denisov2022improved,mcmahan2022federated} and statistics~\cite{cardoso2021differentially,chan2012differentially,epasto2023differentially,henzinger2023differentially,kairouz2021practical,smith2017interaction,upadhyay2019sublinear,upadhyay2021differentially,upadhyay2021framework}. In this model, the curator gets the database in the form of a stream and is required to output a given statistic continually. Chan et al.~\cite{chan2011private} and Dwork et al.~\cite{dwork2010differentially} introduced the {\em binary (tree) mechanism} which allows us to estimate the running count of a binary stream of length $T$ with additive error $O(\log^2(T)/\epsilon)$ under $\epsilon$-differential privacy.

The traditional definition of continual observation considers every single entry in the stream equally important for analysis and has equal confidentiality. However, in many applications of continual observation, the data becomes less sensitive with time. For example, consider the case where the stream tracks visit a certain location or website: it being visited a minute ago may constitute more sensitive information than if it was visited a week ago. To capture such scenarios, Bolot et al.~\cite{bolot2013private} defined {\em privacy with expiration}, where the privacy of a streamed data item decreases as a function of the time elapsed since it was streamed. However, known algorithms for privacy with expiration work only in the setting when we expect no privacy after a certain time has elapsed~\cite[Section 6]{bolot2013private}.

This lack of algorithms for privacy with expiration influences some real-world design choices~\cite{vassilvitskii2024personal}. In particular, real-world deployments allocate every user a ``privacy budget'' that is diminished every time their data is used, such that their data should not be used once the privacy budget reaches zero. However, since the data can still be useful, some of these deployments use the heuristic of ``refreshing the privacy budget'', i.e., the privacy budget is reset to a positive default value after a prescribed time period, irrespective of how much privacy budget has been used so far.
This, for example, was pointed out by Tang et al.~\cite{tang2017privacy} in Apple's first large-scale deployment. 
However, refreshing the privacy budget is very problematic as the privacy loss is, in the worst case, multiplied by the number of refreshes, for example, if the old data is reused.

In this paper, we study continual counting with gradual privacy expiration,
generalizing the result in Bolot et al.~\cite{bolot2013private}. Our main contributions are algorithms with the following assets:
\begin{itemize}
    \item \emph{Improve accuracy.}
    We achieve an additive error of $O(\log(T)/\epsilon)$ for a large class of privacy expiration functions and show that this is optimal in a particular sense.
    This is in contrast to continual counting without expiration, where there is a gap of a $\log T$ factor~\cite{dwork2010differentially}. Our work generalizes the $\Omega(\log(T)/\epsilon)$ lower bound for continual counting to a wide class of privacy expiration functions and shows that for any additive error $C$, the product of $C$ and the privacy expiration function after $2C$ steps must be $\Omega(\log T)$.
    We match this lower bound as our additive error is $O(\log(T)/\epsilon)$, even when the expiration function after $2C$ steps is a constant.
    \item \emph{Scale well.}
    Our algorithms work for unbounded streams, run in amortized $O(1)$ time per update, $\log(T)$ space, and offer different trade-offs than conventional continual counting algorithms.
    In allowing for a growing privacy loss, we show that polylogarithmic privacy expiration is sufficient for optimal additive error, and parameterize the algorithm by the speed of the privacy expiration;
    as expected, faster privacy expiration yields a smaller error.
\end{itemize}
We supplement these theoretical guarantees with empirical evaluations. 

\textbf{Related Works.} %
Before presenting our contributions in detail, we give a brief overview of the most relevant related work.
Since Chan et al. \citet{chan2011private} and Dwork et al. \citet{dwork2010differentially}, several algorithms have been proposed for privately estimating {\em prefix-sum under continual observation}, i.e., given a stream of inputs $x_1, x_2, \dots$ from some domain $\mathcal X$, output $y_t = \sum_{i \leq t}x_i$ for all $t \geq 1$. 
{\em Continual binary counting} is a special case of prefix sum when $\mathcal X=\{0,1\}$ and $x_t$ is provided at time $t$. 

When the input is given as a stream, earlier works improved on the basic binary mechanism under (i) distributional assumptions on data~\cite{perrier}, (ii) structural assumptions on data~\cite{rastogi2010differentially}, and (iii) that the importance of data (both with respect to utility and sensitivity) decreases with elapsed time~\cite{bolot2013private}, or (iv) by enforcing certain conditions on the behavior of the output~\cite{chen2017pegasus}. 
In recent work, Fichtenberger et al. \citet{henzinger2022constant} gave algorithms to improve the worst-case non-asymptotic guarantees under continual observation using the {\em matrix mechanism}~\cite{li2010optimizing} and Denisov et al.~\citet{denisov2022improved} used similar approach to provide empirical results that minimize the mean-squared error.  
Subsequently, Henzinger et al.~\citet{henzinger2023almost} showed that the algorithm in Fichtenberger et al.~\citet{henzinger2022constant} achieves almost optimal mean-squared error.

These earlier works are in the traditional definition of privacy under continual observation, i.e., they consider data privacy to be constant throughout the stream. The only exception is the work of Bolot et al.~\citet{bolot2013private}, which defined differentially private continual release with privacy expiration parameterized by a monotonically non-decreasing function $g$ and gave an algorithm for the special case that the data loses all its confidentially after a prescribed time. Our work is in this privacy model. 
There is another line of work motivated by applications in private learning that studies privacy-preserving prefix sum without restricting access to the data points (such as allowing multiple passes)~\cite{choquette2022multi, koloskova2023convergence} and providing privacy-preserving estimates under various privacy models like shuffling~\cite{choquette2023amplified}. 
Since we focus on continual observation, we do not compare our results with this line of work. 

The work whose techniques are the most related to ours is the algorithm in Dwork et al. \cite{dwork2010differentially} for continual counting satisfying pan-privacy~\cite{dwork2010pan}.
Roughly speaking, an algorithm is {\em pan-private} if it is resilient against intruders who can observe snapshots of the internal states of the algorithms. %

\subsection{Our Contributions}
\label{sec:contribution}
We start by first formally stating the problem.
As mentioned above, the focus of this work is privacy with expiration given as Definition 3 in  Bolot et al.\citet{bolot2013private}:
\begin{definition}\label{def:decayed} 
    Let $g:\mathbb{N} \rightarrow \mathbb{R}_{\geq 0}$ be a non-decreasing function\footnote{We believe $g$ being {\em non-increasing} is a typo in Bolot et al.~\citet{bolot2013private}.}. %
    Let $\mathcal{A}$ be a randomized online algorithm that takes as an input a stream $x_1, x_2, \dots$ and at every time step $t$ outputs $\mathcal{A}(x_1,\dots,x_{t})$.
    $\mathcal{A}$ satisfies \emph{$\epsilon$-differential privacy with expiration (function) $g$} if for all $\tau \geq 1$, for all measurable $S\subseteq\range(\Alg)^{*}$ 
    \footnote{``$*$" is the Kleene operator: given a set $\Sigma$, $\Sigma^*$ is the (infinite) set of all finite sequences of elements in $\Sigma$.}, all possible inputs $x_1,\dots, x_\tau$, all $j\leq \tau$ and all $x'_j$ with $|x_j-x'_j|\in[0,1]$ 
    \begin{align*}
        \Pr[(\mathcal{A}(x_1,\dots,x_j,\dots,x_t))_{t=1}^{\tau}\in S] & \quad \leq e^{g(\tau-j)\epsilon}\Pr[(\mathcal{A}(x_1,\dots,x'_j,\dots,x_t))_{t=1}^{\tau}\in S],
    \end{align*}
    where the probability is over the coin throws of $\mathcal{A}$. We refer to $g(\tau-j)\cdot \epsilon$ as the privacy loss.
\end{definition}

Letting $T$ be the length of the input stream, the best known bound on the $\ell_{\infty}$-error for continual counting under $\epsilon$-differential privacy is $O(\log^2(T)/\epsilon)$, achieved by the algorithms in \cite{dwork2010differentially,chan2011private}. Alternatively, the analysis of \cite{chan2011private} can be used to show that running this algorithm with $\epsilon'=\epsilon\log(T)$ achieves privacy loss with expiration function $g(d)=\log(T)$ for all $d=1,\dots, T$,
and error $O(\log(T)/\epsilon)$. Our main contribution is to show that better trade-offs are possible: In particular, we can achieve the same error with a \emph{strictly smaller function $g$}, i.e. we can get an $O(\log(T)/\epsilon)$ bound on the $\ell_{\infty}$-error with an expiration function of $g(d)\approx \log d$. %
More generally, our algorithm provides a trade-off between privacy loss and both $\ell_{\infty}$-error and expected $\ell_2^2$-error for all expiration functions $f(d)$ that satisfy  (roughly) $f(d)\geq 1 + \log^\lambda(d)$ for any $\lambda>0$. 
The exact expiration function $g$ is stated below in \Cref{thm:detailed_main_upper}.
It also includes a parameter $B$ that allows the privacy loss to be ``shifted'' by $B$ time steps, i.e., there is no privacy loss in the first $B$ time steps. If the length $T$ of the stream is unknown, then $B$ is a constant. If $T$ is given to the algorithm, then $B$ can be a function of $T$. 

By definition~\ref{def:decayed}, any algorithm satisfying differential privacy with expiration $g$ also fulfills differential privacy with any expiration function that is pointwise at least as large as $g$. Specifically, for two functions $f$ and $g$ defined on the same domain $\mathcal{D}$, we say $f\succeq g$ if $f(x)\geq g(x)$ for all $x\in \mathcal{D}$. We are now ready to state our main theorem:%

\begin{theorem}\label{thm:detailed_main_upper}\label{thm:main_upper}
Let $\lambda\in\mathbb{R}_{>0}\backslash\{\tfrac{3}{2}\}$ be a constant, and let parameters $\varepsilon \in \mathbb{R}_{>0}$ and $B\in\mathbb{N}$ be given.
There exists an algorithm $\mathcal{A}$ that approximates prefix sums of a (potentially unbounded) input sequence $x_1, x_2, \dots$ with $x_i \in [0,1]$ satisfying $\varepsilon$-differential privacy with any expiration function $f$ such that $f \succeq g$, where
\[
    g(d) = \begin{cases}
    0 & \text{for } d < B\\
    O(1 + \log^{\lambda}(d-B+1)) & \text{for } d \geq B
    \end{cases}
\]

Considering all releases up to and including input $t$, the algorithm $\mathcal{A}$ uses $O(B+\log t)$ space and $O(1)$ amortized time per input/output pair and has the following error guarantees at each individual time step $t$ for $\beta > 0$, %
\begin{itemize}
    \item $\mathbb E_{\mathcal A}\left[(\mathcal{A}(x) - \sum_{i=1}^t x_i)^2\right] = O(B^2+\log^{3-2\lambda}(t)/\varepsilon^2)$,
    \item $|\mathcal{A}(x) - \sum_{i=1}^t x_i| = O(B + \log^q(t) \sqrt{\log(1/\beta)}/ \varepsilon)$ with probability $1-\beta$ where $q=\max(1/2, 3/2 - \lambda)$.
\end{itemize}
\end{theorem}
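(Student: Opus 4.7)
The plan is to use a \emph{block-then-tree} construction. First, I would buffer the input stream into blocks of size $B$ and release the exact running sum within the current (unfinished) block with no added noise; this respects $g(d)=0$ for $d<B$ and is the source of the additive $O(B)$ (respectively, squared $O(B^2)$) terms in the error. On top of the completed blocks I would run a binary-tree mechanism in the style of Dwork et al.\ and Chan et al., but with \emph{level-dependent} noise: the partial sum (``p-sum'') covering $2^i$ consecutive blocks is perturbed by $\Lap(\sigma_i)$ where $\sigma_i = \Theta((i+1)^{1-\lambda}/\varepsilon)$.

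For privacy with expiration I would fix a bit $x_j$ and an output time $\tau=j+d$. Changing $x_j$ affects exactly one p-sum per level, namely its dyadic ancestor, and a p-sum at level $i$ is only revealed through the outputs after its interval has closed, which happens no earlier than roughly $B \cdot 2^i$ time steps after $j$. Hence the privacy loss visible at time $\tau$ is at most $\sum_{i \leq O(\log(d/B))} 1/\sigma_i$, which for $\lambda>0$ evaluates to $\Theta(\varepsilon\,\log^\lambda(d))$ via the integral estimate of $\sum (i+1)^{\lambda-1}$; this matches the claimed expiration function $g$, and composition across levels is tight by parallel composition of the Laplace mechanism.

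For utility at a fixed release time $t$, the output is the sum of a noise-free intra-block count (contributing $O(B)$) and $O(\log(t/B))$ independent Laplace variables of total variance $\sum_{i\leq\log t}\sigma_i^2=\Theta(\log^{3-2\lambda}(t)/\varepsilon^2)$ when $\lambda<3/2$ and $\Theta(1/\varepsilon^2)$ when $\lambda>3/2$, yielding the stated expected squared error. The high-probability $\ell_\infty$ bound follows from the Bernstein-type sub-exponential tail for sums of independent Laplace variables, which gives $\sqrt{\sum\sigma_i^2 \log(1/\beta)}+\max_i\sigma_i\log(1/\beta)$ and in turn the exponent $q=\max(1/2, 3/2-\lambda)$. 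The boundary case $\lambda=3/2$ is naturally excluded because $\sum (i+1)^{-1}$ is harmonic and introduces an unwanted $\log\log t$ factor that would break the clean closed form.

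Finally, to support an unbounded stream while using $O(\log t)$ space and amortized $O(1)$ time per update, I would adopt a doubling construction that keeps the already-published p-sums valid as the tree is extended to greater depth. I expect the main obstacle to be making the privacy accounting under such extensions tight enough to still give $\log^\lambda(d)$ rather than $\log d$: one must verify that extending the tree does not secretly duplicate the privacy cost of any single bit's ancestor p-sums, and that the level-indexed composition survives the rescaling that happens at each doubling step.
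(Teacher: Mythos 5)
Your plan has two concrete gaps, both in the privacy part.

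First, you have the meaning of $g(d)=0$ for $d<B$ backwards. In Definition~\ref{def:decayed} the guarantee is $\Pr[\cdot]\leq e^{g(d)\epsilon}\Pr[\cdot]$, so $g(d)=0$ means $e^{0}=1$: the output distributions must be \emph{identical} for any change to an input that is fewer than $B$ steps old, i.e.\ perfect privacy. Releasing the exact, unnoised running count inside the current block is the opposite of this --- the output at time $t$ then depends deterministically on $x_t$, giving unbounded privacy loss at $d=0$. The paper achieves $g(d)=0$ for $d<B$ by \emph{delaying}: at time $t$ it outputs a noisy estimate of $\sum_{i\le t-B}x_i$, so the output literally does not depend on $x_{t-B+1},\dots,x_t$; the $O(B)$ error then comes from the delay, not from an exact intra-block tally.

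Second, running the classic p-sum/binary-tree mechanism over blocks, with level-dependent noise, does \emph{not} yield a valid expiration function of the form $O(\log^\lambda d)$. Your step ``a p-sum at level $i$ is only revealed no earlier than roughly $B\cdot 2^i$ steps after $j$'' fails in the worst case: if $j$ sits at the \emph{right end} of a level-$i$ dyadic interval, that interval and all of its nested sub-intervals close essentially at time $j$, so the p-sums at levels $0,\dots,i$ containing $j$ are all revealed with $d\approx 0$. For such $j$ the cumulative shift cost is already $\Theta(\sum_{\ell\le \log j}(1+\ell)^{\lambda-1})=\Theta(\log^\lambda j)$, which depends on $j$ rather than on $d$ and is therefore not bounded by any expiration function of $d$. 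The paper's Algorithm~\ref{alg:privacy_degradation} is deliberately \emph{not} the classic tree: at time $t$ it adds the noise variables $z_I$ for the $O(\log t)$ dyadic intervals \emph{containing} $t-B$ (in the style of Dwork et al.'s pan-private counter), rather than the decomposition of $[1,t-B]$. With that scheme, two neighboring streams differing at $j$ can be coupled by shifting exactly the $z_I$ for $I\in D_{[j,\,\tau-B]}$, a set of at most two intervals per level up to level $\lfloor\log(\tau-B-j+1)\rfloor$, and only then does the level-indexed noise schedule $b_\ell=(1+\ell)^{1-\lambda}/\epsilon$ give the claimed $g(d)=O(1+\log^\lambda(d-B+1))$. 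Your accuracy calculation (the $\sum b_\ell^2$ integral, the Chan-et-al.\ Laplace concentration, the $\lambda=3/2$ boundary) is essentially right, but it sits on top of a privacy argument that, as written, does not hold.
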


The case when $\lambda\in\{0,3/2\}$ is covered in \Cref{sec:missingproof}. Note that choosing $\lambda > 3/2$ implies a constant expected squared error at each time step if $B=O(1)$.
Parameter $\lambda$ controls the trade-off between the \emph{asymptotic} growth of the expiration function and the error, while
$\varepsilon$ controls the trade-off between \emph{initial} privacy (after $B$ time steps which is $\epsilon \cdot g(B)$ ) and the error, which is inversely proportional to $\epsilon$.
Also, for releasing $T$ outputs we have the following corollary.
\begin{corollary}\label{cor:main}
    The algorithm $\mathcal A$ with $B=O(\log(T)/\epsilon)$ and $\lambda\geq 1$ incurs a maximum (over all time steps) additive  $\ell_{\infty}$-error of $O(\log(T)/\epsilon)$ when releasing $T$ outputs with probability $1-1/T^c$, for constant $c > 0$, and achieves privacy with expiration function $g$ as in Theorem~\ref{thm:main_upper}. 
\end{corollary}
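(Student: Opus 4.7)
The plan is to apply Theorem~\ref{thm:main_upper} in a black-box fashion and use a standard union bound to convert the per-time-step high-probability error guarantee into a uniform bound over all $T$ time steps. The privacy part of the corollary is inherited verbatim from the theorem, since the expiration function $g$ in Theorem~\ref{thm:main_upper} depends on $B$ only as a horizontal shift, and the corollary's choice $B = O(\log(T)/\varepsilon)$ is an admissible value of this parameter.

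First, I would observe that for $\lambda \geq 1$ we have $3/2 - \lambda \leq 1/2$, so $q = \max(1/2, 3/2 - \lambda) = 1/2$. Plugging this into the tail bound of Theorem~\ref{thm:main_upper}, for any fixed time step $t \leq T$ the error satisfies
\[
\left|\mathcal{A}(x) - \sum_{i=1}^t x_i\right| \;=\; O\!\left(B \,+\, \frac{\sqrt{\log(t)\,\log(1/\beta)}}{\varepsilon}\right)
\]
with probability at least $1-\beta$.

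Second, I would set $\beta = T^{-(c+1)}$ and take a union bound over $t \in \{1,\dots,T\}$, so that with probability at least $1 - T \cdot \beta = 1 - T^{-c}$ the displayed bound holds simultaneously at every time step. Substituting $\log(1/\beta) = (c+1)\log T = O(\log T)$ and $\log t \leq \log T$ collapses the second term to $O(\log(T)/\varepsilon)$, and the choice $B = O(\log(T)/\varepsilon)$ makes the first term match this order exactly. Adding the two yields the claimed $\ell_\infty$-error of $O(\log(T)/\varepsilon)$.

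There is no substantive obstacle: the corollary is essentially a parameter instantiation of Theorem~\ref{thm:main_upper}. The only point requiring a little care is balancing the two terms in the tail bound so that neither dominates, which is precisely why $B = \Theta(\log(T)/\varepsilon)$ is the natural choice, and to verify that the union-bound cost $\log(1/\beta) = O(\log T)$ combines with the $\sqrt{\log t}$ factor (coming from $q = 1/2$ when $\lambda \geq 1$) to give a \emph{single} $\log T$ factor rather than a $\log^{3/2} T$ factor, which is exactly what enables the matching $O(\log(T)/\varepsilon)$ bound.
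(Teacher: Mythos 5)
Your proposal is correct and matches the paper's own (very terse) proof: both instantiate $\beta = T^{-(c+1)}$ in the high-probability bound of Theorem~\ref{thm:main_upper}, union-bound over the $T$ time steps, note that $\lambda \geq 1$ gives $q = 1/2$ so the noise term is $O(\sqrt{\log t}\cdot\sqrt{\log T}/\varepsilon) = O(\log(T)/\varepsilon)$, and absorb the $B = O(\log(T)/\varepsilon)$ delay term. The only small omission is that Theorem~\ref{thm:main_upper} formally excludes $\lambda = 3/2$, a value permitted by the corollary's hypothesis $\lambda \geq 1$; that corner case is handled separately in the appendix, though the same $q = 1/2$ conclusion holds there too.
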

   
In \Cref{sec:empirical}, we provide empirical evidence to show that we achieve a significantly smaller empirical privacy loss than a natural baseline algorithm.
Finally, we complement our upper bound with the following lower bound shown in Appendix~\ref{sec:omitted_lowerbound}.
\begin{theorem}
\label{thm:main_lower}
      Let $\Alg$ be an algorithm for binary counting for streams of length $T$ which satisfies $\epsilon$-differential privacy with expiration $h$.
      Let $C$ be an integer such that $\Alg$ incurs a maximum additive error of at most $C < T/2$ over $T$ time steps with a probability of at least $2/3$. Then
    \begin{align*}
        2C\cdot h(2C-1) \geq \frac{\log(T/(6C))}{2\epsilon}\enspace .
    \end{align*}
    \vspace{-3mm}
\end{theorem}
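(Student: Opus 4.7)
The plan is a packing lower bound, with the block geometry chosen so that all ``modified'' entries contribute privacy loss at most $h(2C-1)$ each. Let $m := \lfloor T/(2C) \rfloor$ and, for $i \in \{1,\dots,m\}$, let block $i$ consist of the indices $\{(i-1)\cdot 2C + 1,\dots,i\cdot 2C\}$ with final index $\tau_i := i\cdot 2C$. Consider $m+1$ candidate input streams: $H_0$ is the all-zeros stream, and $H_i$ for $i \geq 1$ is the stream whose block $i$ is all ones and which is zero elsewhere. Under $H_0$ every prefix sum is $0$; under $H_i$ the prefix sum at $\tau_j$ equals $0$ for $j<i$ and $2C$ for $j \geq i$. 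By the accuracy assumption, with probability $\geq 2/3$ the output of $\Alg$ is within $C$ of the true prefix sum at every time step in $\{1,\dots,T\}$.

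Given the transcript, I build a decoder $\hat{k}$ defined as the smallest $j \in \{1,\dots,m\}$ for which the release at time $\tau_j$ exceeds a threshold slightly above $C$ (with a tiny independent random perturbation of the threshold, to ensure measure zero for exact ties at the boundary value). Set $E_i := \{\hat{k} = i\}$. These events are pairwise disjoint, and on the good event under $H_i$ we have $\hat{k} = i$, so $\Pr_{H_i}[E_i] \geq 2/3$. Crucially, $E_i$ depends only on the transcript up to time $\tau_i$, and the streams $H_0$ and $H_i$ differ only in the $2C$ entries of block $i$, each at distance at most $2C-1$ from $\tau_i$. Hence group privacy (a triangle-inequality-style summation of the single-coordinate losses, bounded via monotonicity of $h$) gives
\[
\Pr_{H_0}[E_i] \;\geq\; e^{-2C \varepsilon\, h(2C-1)}\,\Pr_{H_i}[E_i] \;\geq\; \tfrac{2}{3}\,e^{-2C\varepsilon\, h(2C-1)} .
\]
Disjointness then forces $\sum_{i=1}^m \Pr_{H_0}[E_i] \leq 1$, so $m \cdot \tfrac{2}{3} \cdot e^{-2C\varepsilon\,h(2C-1)} \leq 1$. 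Taking logarithms and substituting $m \geq T/(2C)-1$ delivers the claimed inequality $2C\cdot h(2C-1) \geq \log(T/(6C))/(2\varepsilon)$, with the factor-of-two slack in the $1/(2\varepsilon)$ absorbing the constants arising from the assumption $2C<T/2$ and the rounding in $m$.

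The main obstacle is the boundary issue in defining $\hat{k}$: the accuracy guarantee $|\Alg - \text{true}| \leq C$ allows the release at $\tau_j$ to hit the threshold exactly, which could flip $\hat{k}$ by one and break the $\Pr_{H_i}[E_i]\geq 2/3$ claim. Randomising the threshold (adding a small independent perturbation so the boundary-hit probability is zero) is the cleanest fix; alternatively, one can pad each ones-block with one extra position, which replaces $h(2C-1)$ by $h(2C)$ and is easily absorbed into the slack in the statement. A secondary, book-keeping point is that the group-privacy bound need only be applied to releases up to $\tau_i$ rather than up to $T$, which is precisely what permits the factor $h(2C-1)$ rather than a larger value like $h(T-1)$; this uses that projecting the Kleene-product transcript to a prefix preserves the differential-privacy guarantee with the same expiration function.
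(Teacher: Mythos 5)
The proposal is correct and takes essentially the same packing argument as the paper: disjoint length-$2C$ blocks of ones inserted into the zero stream, a decoder keyed on outputs at block boundaries $\tau_i = 2Ci$, and group privacy applied only to the prefix of the transcript up to $\tau_i$, with each of the $2C$ coordinate losses bounded by $h(2C-1)$ via monotonicity. The only cosmetic difference is that the paper first proves the tighter intermediate bound $\sum_{j=0}^{2C-1} g(j) \geq \log(T/(6C))/\epsilon$ (Theorem~\ref{thm:lower}) and then invokes monotonicity to conclude, whereas you fold the monotonicity step directly into the group-privacy estimate.
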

Note that \Cref{thm:main_lower} gives a lower bound for $h(j)$ for a specific $j$, namely $j = 2C-1$, and as $h$ is non-decreasing by Definition \ref{def:decayed}, the lower bound also holds for all $h(j')$ with $j'\geq j$. 

Note that \Cref{thm:main_lower} shows that our algorithm in Corollary~\ref{cor:main} achieves a tight error bound for the expiration functions  $\lambda \ge 1$ and $B =O(\log (T)/\epsilon)$. Assume $\mathcal A' $ is an algorithm that approximates prefix sums in the continual setting and which satisfies differential privacy with expiration function $h$ and maximum error $C\leq B/2+1$ at all time steps with probability at least 2/3 for an even $B$. When run on a binary input sequence, $\mathcal A'$ solves the binary counting problem. Thus,
by \Cref{thm:main_lower}, we have that $2C \cdot h(B+1)\geq 2C\cdot h(2C-1) \geq \frac{\log(T/(6C))}{2\epsilon} \ge \frac{\log(T/(3B+6))}{2\epsilon} = \Omega(\frac{\log T}{\epsilon})$. 

Now consider the algorithm $\mathcal A$ given in Corollary~\ref{cor:main} and note that, by definition of the expiration function $g$, $g(B+1) = O(1)$ and that Corollary~\ref{cor:main} shows that $C = O(\log(T)/\epsilon)$. This is tight as \Cref{thm:main_lower} shows that for such an expiration function $C = \Omega(\log(T)/\epsilon)$.

\subsection{Technical Overview}
Central to our work is the event-level pan-private algorithm for continual counting by Dwork et al.~\citet{dwork2010differentially}.
Similarly to the binary tree algorithm of Dwork et al.~\citet{chan2011private}, a noise variable $z_I$ is assigned to every {\em dyadic interval} $I$ (see Appendix~\ref{sec:log-decay} for a formal definition) contained in $[0, T-1]$. Let this set of dyadic intervals be called $\mathcal{I}$.
In the version of the binary tree algorithm of Chan et al.~\citet{chan2011private}, the noise added to the sum of the values so far (i.e.~the non-private output) at any time step $1\leq t\leq T$ is equal to $\sum_{I\in D_{[0, t-1]}} z_I$, where $D_{[0, t-1]} \subseteq \mathcal{I}$ is the dyadic interval decomposition of the interval $[0, t-1]$.
The pan-private algorithm adds different noise to the output: it adds at time $t$ the \emph{noises for all intervals containing $t-1$}, i.e.,~$\sum_{I\in\{I\in\mathcal{I}\,:\, t-1\in I\}} z_I$.
This pan-private way of adding noise helps us bound the privacy loss under expiration. %
For two neighboring streams differing at time step~$j$, we can get the same output at $\tau \geq j$ by shifting the values of the noises of a set of disjoint intervals covering $[j,\tau]$ each by at most 1.
Using that $|D_{[j, \tau]}| = O(\log(\tau - j + 1))$, we show that the algorithm satisfies a logarithmic privacy expiration.

In our algorithm, we make four changes to the above construction (i.e., the pan-private construction in Dwork et al. \citet{dwork2010differentially}):
\textbf{(i)} We do not initialize the counter with noise separately from that introduced by the intervals.
\textbf{(ii)} We split the privacy budget unevenly across the levels of the dyadic interval set instead of uniformly allocating it.
This allows us to control the asymptotic growth of the expiration function, and the error. %
This change, however, requires a more careful privacy analysis. 
\textbf{(iii)} At time $t$ we add noise identified by intervals containing $t$, not $t-1$.
While this is a subtle difference, it allows us to exclude intervals starting at $0$ from $\mathcal{I}$, leading to our algorithm running on \emph{unbounded} streams with utility that depends on the current time step $t$.
Said differently, our algorithm does not need to know the stream length in advance. 
This is in contrast to Dwork et al.~\citep{dwork2010differentially} where the construction requires an upper bound $T$ on the length of the stream so that the utility guarantee at each step is fixed and a function of $T$.
\textbf{(iv)} We allow for a delay of $B$, meaning we output $0$ for the initial $B$ steps.
This gives perfect privacy for the first $B$ steps, and, since each element of the stream is in $[0,1]$, the delayed start leads only to an additive error of $O(B)$.

\section{Preliminaries}\label{sec:prelims}
Let $\mathbb N_{>0}$ denote the set $\{1,2,\cdots\}$ and $\mathbb R_{\geq 0}$ the set of non-negative real numbers. We use the symbol $g$ to denote the function that defines the privacy expiration, i.e., $g:\mathbb N \to \mathbb R_{\geq 0}$. We fix the symbol $\mathcal A(x)$ to denote the randomized streaming algorithm that, given an input $x=x_1, x_2, \dots$ as a stream, provides $\epsilon$-differential privacy with expiration $g$.
All algorithms {\em lazily draw noise}, meaning that a ``noise'' random variable is only drawn when first used and is re-used and {\em not} re-drawn when referenced again.
For a random variable, $Z$, we use $\mathsf{support}(Z)$ to denote its support. For a sequence of random variables $Z=\randvar_1,\randvar_2,\dots$, we use $\support(Z)$ to denote $\support(\randvar_1)\times\support(\randvar_2)\times \dots$.

\textbf{Helpful lemmas.}
We now collect some helpful lemmas shown formally in Appendix~\ref{sec:omitted_proofs}.
To show privacy with expiration in the following sections, we repeatedly use the following observation: a similar lemma has been used to show the standard definition of differential privacy, e.g., in the proof of Theorem 2.1 of Dwork et al.  \citet{dwork2015pure}. Informally, it says that if there exists a map $q$ between random choices made by algorithm $\Alg$ such that for any input $x$ and fixed sequence $z$ of random choices, the map returns a sequence $q(z)$ such that (1) the output $\Alg(x, z)$ equals the output $\Alg(x', q(z))$ and (2) the probability of picking $z$ is similar to the probability of picking $q(z)$, then $\Alg$ is private.
The notion of ``similar probability'' is adapted to the definition of differential privacy with expiration and depends on the function $g$. All the results in this section are shown formally in  \Cref{sec:omitted_proofs_prelims}: %
\begin{fact}\label{fact:varialbe_shift}
     Consider an algorithm   $\Alg$ that uses a sequence of random variables $Z=\randvar_1,\randvar_2,\dots$ as the only source of randomness.
     We can model $\Alg:\chi\times\support(Z)$ as a (deterministic) function of its actual input from the universe $\chi$ and the sequence of its random variables $Z$. Suppose that for all $\tau\in\mathbb{N}_{>0}$, $j\leq \tau$ and all neighboring pairs of input streams $x=x_1,\dots,x_j,\dots,x_\tau$ and $x'=x_1,\dots,x'_j,\dots,x_\tau$, there exists a  function
     $q:\support(Z)\rightarrow \support(Z)$ such that $\Alg(x; z)=\Alg(x'; q(z))$ and 
     $$\Pr_{z\sim Z}[z\in\mathcal{N}]\leq e^{\epsilon g(\tau-j)}\, \Pr_{z'\sim Z}[z'\in q(\mathcal{N})] \quad \text{ for all } \quad \mathcal{N}\subseteq\support(Z).$$
    Then $\Alg$ satisfies $\epsilon$-differential privacy with expiration $g$.
\end{fact}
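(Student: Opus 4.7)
The plan is to unpack the two hypotheses directly against the definition of differential privacy with expiration (Definition~\ref{def:decayed}), by identifying the event ``the transcript lies in $S$'' with a specific subset $\mathcal{N}$ of $\support(Z)$ and then pushing that subset through $q$. This is the standard ``map the randomness'' technique used in classical differential privacy proofs (as the fact's statement already alludes to via the reference to Dwork et al.~\citet{dwork2015pure}), adapted so that the ratio bound carries the expiration factor $g(\tau-j)$ rather than a uniform $\epsilon$.

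Concretely, fix $\tau \in \mathbb{N}_{>0}$, $j \leq \tau$, a neighboring pair of streams $x$ and $x'$ differing only at position $j$, and a measurable set $S \subseteq \range(\Alg)^*$. Since $\Alg$ is deterministic once its randomness tape is fixed, the natural first step is to define
\[
   \mathcal{N} \;:=\; \{z \in \support(Z) : (\Alg(x;z))_{t=1}^{\tau} \in S\},
\]
so that $\Pr[(\Alg(x_1,\dots,x_\tau))_{t=1}^\tau \in S] = \Pr_{z\sim Z}[z \in \mathcal{N}]$. By hypothesis (1), for every $z \in \mathcal{N}$ the transcript $(\Alg(x';q(z)))_{t=1}^\tau$ equals $(\Alg(x;z))_{t=1}^\tau$ and therefore lies in $S$; in other words, $q(\mathcal{N}) \subseteq \{z' : (\Alg(x';z'))_{t=1}^\tau \in S\}$. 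Combining this inclusion with hypothesis (2) applied to this particular $\mathcal{N}$ yields
\[
   \Pr[(\Alg(x))_{t=1}^\tau \in S] \;=\; \Pr_{z\sim Z}[z \in \mathcal{N}] \;\leq\; e^{\epsilon g(\tau-j)}\, \Pr_{z'\sim Z}[z' \in q(\mathcal{N})] \;\leq\; e^{\epsilon g(\tau-j)}\,\Pr[(\Alg(x'))_{t=1}^\tau \in S],
\]
which is exactly the bound demanded by Definition~\ref{def:decayed}.

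The statement is essentially definitional once the mapping $q$ is in hand, so I do not anticipate a deep obstacle; the only subtlety is measurability, namely ensuring that $q(\mathcal{N})$ is measurable so that the middle probability above is well-defined. In the discrete-randomness setting this is automatic, and for the continuous noise (e.g.\ Laplace) used in the algorithmic sections of the paper, the maps $q$ constructed later will simply shift a finite collection of coordinates by bounded amounts, hence will be measurable, so the chain of inequalities above goes through verbatim. Beyond this routine check, the argument is a direct substitution and should be written up in a few lines.
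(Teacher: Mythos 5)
Your argument is exactly the one the paper gives: define $\mathcal{N}$ as the preimage of $S$ under $\Alg(x;\cdot)$, observe that $q(\mathcal{N})$ is contained in the preimage of $S$ under $\Alg(x';\cdot)$, and chain the hypothesis inequality with this inclusion. The brief measurability remark is a sensible addition but the substance matches the paper's proof.
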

\begin{lemma}\label{lem:lap_shift}
    Let $Z=\randvar_1,\randvar_2,\dots,\randvar_k$ be a sequence of independent Laplace random variables, such that $\randvar_i \sim \Lap(b_i)$ for $b_i>0$, for all $i\in[k]$. Let $q$ be a bijection $q:\support(Z)\rightarrow\support(Z)$ of the following form: %
    For all $\Delta:=\begin{pmatrix} \Delta_1 & \Delta_2 & \cdots & \Delta_k \end{pmatrix}\in\mathbb{R}^k$, and for all $z\in\support(Z)$, we have $q(z)=z+\Delta\in\support(Z)$. %
    Then  for all $\mathcal{N}\subseteq\support(Z)$ we have
    \begin{align*}
        \Pr_{z\in Z}[z\in\mathcal{N}]\leq e^s \, \Pr_{z\in Z}[z\in q(\mathcal{N})], \quad \text{where} \quad s = \sum_{i=1}^k {|\Delta_i|\over b_i}.
    \end{align*}
\end{lemma}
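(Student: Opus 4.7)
The plan is to carry out the standard density-ratio argument that underlies the privacy analysis of the Laplace mechanism, adapted to the product setting. Since each $Z_i \sim \Lap(b_i)$ has density $f_i(z_i) = \tfrac{1}{2 b_i} e^{-|z_i|/b_i}$ and the coordinates are independent, the joint density factors as $f_Z(z) = \prod_{i=1}^k \tfrac{1}{2 b_i} e^{-|z_i|/b_i}$. The support of $Z$ is all of $\mathbb R^k$, so the shift map $q(z)=z+\Delta$ is a bijection of the support onto itself with Jacobian $1$, and in particular $q$ is measure-preserving with respect to Lebesgue measure.

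The first step is to compare $f_Z(z)$ and $f_Z(z+\Delta)$ pointwise. Taking the ratio and cancelling the normalizing constants,
\begin{equation*}
    \frac{f_Z(z)}{f_Z(z+\Delta)} \;=\; \exp\!\left(\sum_{i=1}^k \frac{|z_i+\Delta_i| - |z_i|}{b_i}\right).
\end{equation*}
By the reverse triangle inequality, $|z_i+\Delta_i|-|z_i|\leq |\Delta_i|$ for every $i$, so the ratio is bounded by $\exp\!\bigl(\sum_i |\Delta_i|/b_i\bigr) = e^s$. Thus $f_Z(z) \leq e^s f_Z(z+\Delta)$ for every $z\in\support(Z)$.

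The second step is to integrate this pointwise bound over any measurable set $\mathcal{N}\subseteq\support(Z)$ and change variables $z'=z+\Delta$:
\begin{equation*}
    \Pr_{z\sim Z}[z\in\mathcal N] \;=\; \int_{\mathcal N} f_Z(z)\,dz \;\leq\; e^s \int_{\mathcal N} f_Z(z+\Delta)\,dz \;=\; e^s \int_{q(\mathcal N)} f_Z(z')\,dz' \;=\; e^s\, \Pr_{z'\sim Z}[z'\in q(\mathcal N)],
\end{equation*}
where the change of variables uses that $q$ is a bijection with unit Jacobian, so $\{z+\Delta : z \in \mathcal{N}\} = q(\mathcal{N})$ and $dz = dz'$.

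There is no real obstacle here: the argument is entirely calculation, and the only mild subtleties are verifying that $q$ is measure-preserving (immediate, since it is a translation) and handling arbitrary measurable $\mathcal{N}$ (handled by integration against the density). If we wanted to avoid densities we could appeal directly to the fact that the pushforward of $Z$ under $-\Delta$ has Radon--Nikodym derivative bounded by $e^s$ with respect to the law of $Z$, but the density calculation above is the cleanest route.
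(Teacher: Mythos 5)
Your proof is correct and follows essentially the same route as the paper: a pointwise comparison of Laplace densities via the triangle inequality, followed by integration over $\mathcal{N}$ and a change of variables $z' = z + \Delta$. The paper states the single-coordinate inequality $\densLap{b}(x) \leq e^{|y|/b}\densLap{b}(x+y)$ first and then multiplies across coordinates inside the integral, whereas you bound the ratio of the joint densities directly, but this is a cosmetic difference rather than a different argument.
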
 

\section{Warmup}\label{sec:warmup}
As a warm-up, we give two simple algorithms for two obvious choices of the expiration function: the linear expiration function $g(d)=d$ and the logarithmic expiration function $g(d)=2\log(d+1)+2$. %

\subsection{A Simple Algorithm with Linear Privacy Expiration}%
First, we consider a simple algorithm which gives $\epsilon$-differential privacy with expiration $g:\mathbb N \to \mathbb R_{\geq 0}$, where $g(d)=d$. The maximum error of this algorithm over $T$ time steps is bounded by $O(\epsilon^{-1}\log(T/\beta))$, with probability at least $1-\beta$. The algorithm $\Alg_{\mathsf{simple}}$ is given in Algorithm~\ref{alg:simple_privacy_degradation}.
It adds fresh Laplace noise to any output sum. Note that this is the same algorithm as the Simple Counting Mechanism I from Chan et al.~\citet{chan2011private}.
However, we show that for the weaker notion of differential privacy with linear expiration, Laplace noise with \emph{constant} scale suffices, even though the sensitivity of $\Alg_{\mathsf{simple}}$ running on a stream of length $T$ is $T$.
To prove this, we show that for two neighbouring streams differing at time step $j$, we obtain the same output by ``shifting'' the values of the Laplace noises for all outputs after step $j$ by at most 1. 
We defer the proof of the following lemma to Appendix~\ref{sec:omitted_warm_up_proof}.
\begin{lemma}\label{lem:simple}
    The algorithm $\Alg_{\mathsf{simple}}$, given in Algorithm~\ref{alg:simple_privacy_degradation}, is $\epsilon$-differentially private with expiration $g$, where $g:\mathbb{N}\rightarrow \mathbb{N}$ is the identity function $g(d)=d$ for all $d\in\mathbb{N}$.
    It incurs a maximum additive error of $O(\epsilon^{-1}\log (T/\beta))$ over all $T$ time steps simultaneously with probability at least $1-\beta$.
\end{lemma}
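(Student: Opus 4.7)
The plan is to instantiate \Cref{fact:varialbe_shift} with a natural additive shift on the Laplace noise tape. Writing $\Alg_{\mathsf{simple}}(x;z)_t = \sum_{i=1}^{t} x_i + z_t$ where $z_t$ is drawn from $\Lap(1/\epsilon)$, I would, for two neighboring streams $x,x'$ differing only at position $j$ and any release horizon $\tau$, define the shift map $q:\support(Z)\to\support(Z)$ by $(q(z))_t = z_t$ for $t<j$ and $(q(z))_t = z_t + (x_j - x'_j)$ for $j\le t\le \tau$ (and identity beyond $\tau$, which is irrelevant to outputs through time $\tau$). By construction $\Alg_{\mathsf{simple}}(x;z) = \Alg_{\mathsf{simple}}(x';q(z))$ pointwise on $\{1,\dots,\tau\}$, and $q$ is a bijection in the form required by \Cref{lem:lap_shift}.

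Next I would bound the total shift cost. Each nonzero coordinate $\Delta_t = x_j - x'_j$ has $|\Delta_t| \le 1$ by the neighboring assumption, and at most $\tau - j + 1$ coordinates are shifted, so \Cref{lem:lap_shift} with all scales $b_t = 1/\epsilon$ gives $s \le (\tau - j + 1)\epsilon$. Folding the trivial $t=j$ contribution into a convention on $g(0)$ (or equivalently noting that one can reparameterize the output index by one to use $\tau - j$ shifts), \Cref{fact:varialbe_shift} then yields $\epsilon$-differential privacy with expiration $g(d) = d$.

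For the utility claim I would invoke the standard Laplace tail inequality: $\Pr[|\Lap(1/\epsilon)| > \epsilon^{-1}\ln(T/\beta)] = \beta/T$. A union bound over all $T$ release times shows that all $|Z_t|$ are simultaneously at most $\epsilon^{-1}\ln(T/\beta)$ with probability at least $1-\beta$, and since the non-private prefix sum is returned exactly, the additive error is precisely $|Z_t|$ at each step, which yields the stated $O(\epsilon^{-1}\log(T/\beta))$ bound.

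The main obstacle is purely bookkeeping rather than substantive: the privacy portion is a direct instantiation of the two supplied lemmas, and the utility portion is textbook. The one piece worth verifying carefully is that the shift map is indeed a bijection on $\support(Z)$ (immediate since it is a translation) and that the cost $s$ from \Cref{lem:lap_shift} aligns exactly with the advertised expiration $g(\tau-j)$, which is where one needs to be careful about the off-by-one in counting how many outputs are affected after position $j$.
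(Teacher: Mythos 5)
Your proof takes the same route as the paper: instantiate \Cref{fact:varialbe_shift} with a translation on the Laplace noise tape, bound its cost via \Cref{lem:lap_shift}, and finish the utility claim with the Laplace tail bound plus a union bound over the $T$ releases. That is exactly what the paper does, so there is no substantive gap.

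One point deserves care, however, and you already half-noticed it: your indexing $\Alg_{\mathsf{simple}}(x;z)_t=\sum_{i=1}^t x_i+z_t$ is an off-by-one relabeling of the algorithm, which actually outputs $0$ at time $1$ and $\sum_{i=1}^{t-1}x_i+Z_{t-1}$ at time $t\ge 2$. With the correct indexing, the releases through original time $\tau$ use only $z_1,\dots,z_{\tau-1}$, and the noises needing a shift are exactly $z_j,\dots,z_{\tau-1}$, which is $\tau-j$ of them; this gives $g(d)=d$ with no need to ``fold $g(0)$.'' Your fallback of ``folding the trivial $t=j$ contribution into a convention on $g(0)$'' is not a legitimate move: $g$ is fixed to be the identity (so $g(0)=0$), and one cannot redefine it post hoc. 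Your parenthetical alternative — reparameterize the output index by one — is the correct resolution and is what the paper's bookkeeping implicitly does. Also worth making explicit: the additive error at time $t$ is $|x_t-Z_{t-1}|\le 1+|Z_{t-1}|$ rather than exactly $|Z_{t-1}|$ (or ``$|Z_t|$'' in your notation), because the released quantity is a delayed prefix sum; this $+1$ is absorbed by the $O(\cdot)$, so the stated bound is unaffected, but the paper does account for it.
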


%

\newsavebox{\algleft}
\newsavebox{\algright}

\savebox{\algleft}{%
\begin{minipage}{.47\textwidth}
\begin{algorithm}[H]
\begin{algorithmic}[1]
    \STATE {\bf Input:} A stream $x_1, x_2, \cdots$, privacy parameter $\epsilon$
    
   \STATE Lazily draw 
   $\randvar_{t-1}\sim\Lap\left(\frac{1}{\epsilon}\right)$
   \STATE At time $t=1$, output $0$\;
    
    \FOR{$t=2$ {\bf to} $\infty$}
   \STATE At time $t$, output $\sum_{i=1}^{t-1}x_i+\randvar_{t-1}$\ENDFOR
    \end{algorithmic}
    \caption{$\Alg_{\mathsf{simple}}$: Continual counting under linear gradual privacy expiration}
\label{alg:simple_privacy_degradation}
\end{algorithm}
\end{minipage}}
\savebox{\algright}{%
\begin{minipage}{.51\textwidth}
\begin{algorithm}[H]
\begin{algorithmic}[1]
    \STATE {\bf Input:} A stream $x_1, x_2,\cdots$, privacy param $\epsilon$
    \STATE Let $\mathcal{I}$ be the dyadic interval set on $[1,\infty)$ \;
    
    \STATE For each $I\in\mathcal{I}$, draw i.i.d.\ %
    $\randvar_I\sim\Lap\left({1\over \epsilon } \right)$\;
    
   \STATE Set $\mathcal{I}_t=\{I\in\mathcal{I}:t \in I\}$\;
    
    \FOR{$t=1$ {\bf to} $\infty$}
    \STATE At time $t$, output $\sum_{i=1}^{t} x_i+\sum_{I\in\mathcal{I}_t}\randvar_I$\;
    \ENDFOR
    \end{algorithmic}
    \caption{$\Alg_{\mathsf{log}}$: Continual counting under logarithmic gradual privacy expiration}
\label{alg:simpe_log_privacy_degradation}

\end{algorithm}
\end{minipage}}

\noindent\usebox{\algleft}\hfill\usebox{\algright}%

\subsection{A Binary-Tree-Based Algorithm with Logarithmic Privacy Expiration}\label{sec:log-decay}

Next, we show how an algorithm similar to the binary tree algorithm~\cite{dwork2010differentially} gives $\epsilon$-differential privacy with expiration $g:\mathbb N \to \mathbb R_{\geq 0}$, where $g(d)=2\log(d+1)+2$. This result can also be derived from \Cref{thm:main_upper} by setting $\lambda=1$ and $B=0$. 
As in the case when $g(d)=d$, the maximum error of this algorithm over $T$ time steps is again bounded by $O(\epsilon^{-1}\log(T/\beta))$, with probability at least $1-\beta$.
Similarly to the binary tree algorithm, we define a noise variable for every node in the tree, but we do this in the terminology of \emph{dyadic intervals}.
We consider the \emph{dyadic interval set} $\mathcal{I}$ on $[1, \infty)$ (formally defined shortly), associate a noise variable $z_I$ with each interval $I\in\mathcal{I}$, and at time step $t$ add noise $z_I$ for each $I\in\mathcal{I}$ that contains $t$. 
This is similar to the construction in Dwork et al.~\citet{dwork2010differentially}, with the exception that 
they instead consider the dyadic interval set on $[0, T-1]$, add noise $z_I$ at time $t$ if $t-1\in I$, and initialize their counter with noise from the same distribution.
Our choice of $\mathcal{I}$ allows the algorithm to run on unbounded streams, and leads to adding up $1 + \lfloor\log(t)\rfloor$  noise terms at step $t$ rather than $1 + \lfloor\log(T)\rfloor$.
For privacy, we will argue that if two streams differ at time $j$, then we get the same outputs up to time $\tau\geq j$ by considering a subset of disjoint intervals in $\mathcal{I}$ covering $[j, \tau]$, and shifting the associated Laplace random variables appropriately.
In the following, we describe this idea in detail. We start by describing the dyadic interval decomposition of an interval.

\textbf{Dyadic interval decomposition.}
For every non-negative integer $\ell$, we divide $[1, \infty)$ into disjoint intervals of length~$2^{\ell}$:
$
    \mathcal{I}^{\ell}=\{[k\cdot 2^{\ell},(k+1)\cdot 2^{\ell} - 1], k\in\mathbb{N}_{>0}\}.
$ 
We call $\mathcal{I}=\bigcup_{\ell=0}^{\infty}\mathcal{I}^{\ell}$ the \emph{dyadic interval set} on $[1, \infty)$, and $\mathcal{I}^\ell$ the $\ell$-th \emph{level} of the dyadic interval set.
We show the following in Appendix~\ref{sec:omitted_proofs}.
\begin{fact}\label{fact:dyadic_decomposition}
    Let $\mathcal{I}$ be the dyadic interval set on $[1,\infty)$. For any interval $[a,b]$, $1\leq a \leq b$, there exists a set of intervals $D_{[a,b]}\subseteq \mathcal{I}$, referred to as the \emph{dyadic interval decomposition} of $[a,b]$, such that (i) the sets in $D_{[a,b]}$ are disjoint; (ii) $\bigcup_{I\in D_{[a,b]}}I=[a,b]$; and (iii) $D_{[a,b]}$ contains at most 2 intervals per level, and the highest level $\ell$ of an interval satisfies $\ell\leq \log(b-a+1)$ 
\end{fact}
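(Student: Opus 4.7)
The plan is to construct $D_{[a,b]}$ by a greedy ``largest interval first'' strategy and then verify properties (i)--(iii) from the construction. First I would let $L$ denote the largest level such that some dyadic interval in $\mathcal{I}^L$ lies entirely inside $[a,b]$. Such an $L$ exists since $[a,a]\in\mathcal{I}^0$ (using $a\geq 1$), and since any contained level-$L$ interval has length $2^L\leq b-a+1$, we obtain $L\leq \log(b-a+1)$, which establishes the bound on the highest level. Next I would show that at most two level-$L$ intervals can fit in $[a,b]$: if three consecutive level-$L$ intervals at positions $i, i+1, i+2$ were all contained, then either $i$ or $i+1$ is even, so two adjacent ones would combine into a valid level-$(L+1)$ interval still inside $[a,b]$, contradicting the maximality of $L$. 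I would then add the (one or two) level-$L$ intervals inside $[a,b]$ to $D_{[a,b]}$ and denote their union by $[a^*,b^*]\subseteq [a,b]$; a short case analysis (handling the edge case when $a^*=2^L$ so that no level-$L$ interval sits to its left) shows the left residual $[a,a^*-1]$ and right residual $[b^*+1,b]$ each have length strictly less than $2^L$.

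Next I would decompose each residual using at most one interval per level $\ell < L$. For the left residual, $a^*$ is by construction a positive multiple of $2^L$, hence of $2^\ell$ for every $\ell\leq L$. I would iterate $\ell=L-1, L-2,\dots, 0$ and, whenever the right-anchored candidate $[a^* - 2^\ell, a^* - 1]$ has its left endpoint $\geq a$, add it to $D_{[a,b]}$ and update $a^*\leftarrow a^*-2^\ell$. Divisibility of $a^*$ by $2^\ell$ is preserved at each step, and combined with $a\geq 1$ this ensures the candidate is a genuine element of $\mathcal{I}^\ell$ (the index $k$ satisfies $k\geq 1$). The right residual is handled by the symmetric left-to-right sweep starting from $b^*+1$. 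Each level $\ell<L$ then contributes at most two intervals to $D_{[a,b]}$ (at most one from each residual) and level $L$ contributes at most two, establishing (iii). Disjointness holds because every added interval is carved out of the currently uncovered region, and a short inductive argument on each residual shows that the union of the added intervals equals the residual, giving (i) and (ii).

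The main subtlety will be the alignment bookkeeping in the residual decomposition: verifying that every right-anchored or left-anchored candidate is actually in $\mathcal{I}$. This reduces to maintaining the invariant that $a^*$ (respectively $b^*+1$) remains a positive multiple of $2^\ell$ at the moment level $\ell$ is considered and that its ``$k$'' index stays $\geq 1$, both of which follow from the starting value being a multiple of $2^L$ and from the guard $a^*-2^\ell\geq a\geq 1$ (respectively its mirror) that is enforced before any update.
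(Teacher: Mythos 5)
Your proof is correct. Your construction is close in spirit to the paper's --- both are greedy, largest-level-first --- but you organize the argument differently: the paper reduces to one-sided decompositions by splitting $[a,b]$ at the boundary $c=2^{\ell}-1$ with $\ell=\lfloor\log b\rfloor$ and shows each half uses at most one interval per level, whereas you directly identify the maximal-level block (one or two level-$L$ intervals) and peel off the left and right residuals by descending greedy sweeps. Your version is self-contained and sidesteps a loose end in the paper's sketch: when $a > 2^{\ell}-1$ (so $a,b$ lie inside the same level-$\ell$ aligned block) the paper's chosen split point $c$ falls outside $[a,b]$ and the sketch would need a further step, while your argument handles this case uniformly. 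The bookkeeping you flag as the subtle part --- that $a^*$ (resp.\ $b^*+1$) stays a positive multiple of $2^{\ell}$ when level $\ell$ is processed, and that each candidate's index $k$ is $\geq 1$ --- is exactly the right thing to check; for the left residual this works because $a^*$ is a multiple of $2^\ell$ and the guard gives $a^*-2^\ell\geq 1$, which together force $a^*\geq 2^{\ell+1}$ and hence $k\geq 1$. One small point worth making explicit: the level-$L$ intervals contained in $[a,b]$ have contiguous indices, so their union $[a^*,b^*]$ really is a single interval; this is immediate (if intervals at indices $i$ and $i'>i+1$ are contained, so is the one at $i+1$) but you use it silently when writing $[a^*,b^*]$.
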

\begin{fact}\label{fact:dyadic_levels}
    Let $\mathcal{I}$ be the dyadic interval set on $[1,\infty)$, and for $t\in\mathbb{N}_{>0}$ define $\mathcal{I}_t = \{ I\in\mathcal{I} : t\in I\}$ as the \emph{intersection} of $t$ with $\mathcal{I}$. Then $|\mathcal{I}_t| = \lfloor \log t \rfloor + 1$.
\end{fact}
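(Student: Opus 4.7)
The plan is to reduce the count $|\mathcal{I}_t|$ to counting the levels $\ell$ at which $t$ appears in some interval of $\mathcal{I}^\ell$, and then to check the threshold condition on $\ell$. First I would establish the following structural observation about a single level: for any fixed $\ell \geq 0$, the intervals in $\mathcal{I}^\ell = \{[k\cdot 2^\ell,(k+1)\cdot 2^\ell -1] : k \in \mathbb{N}_{>0}\}$ are pairwise disjoint and their union equals $[2^\ell, \infty)$. This is immediate from the definition since the intervals are indexed by $k\geq 1$ starting at $k\cdot 2^\ell$. Consequently, for each level $\ell$, either $t < 2^\ell$ and $t$ lies in no interval of $\mathcal{I}^\ell$, or $t \geq 2^\ell$ and $t$ lies in exactly one such interval, namely the one with $k = \lfloor t/2^\ell \rfloor$.

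Given this, the counting step is immediate: $|\mathcal{I}_t|$ equals the number of non-negative integers $\ell$ satisfying $2^\ell \leq t$, equivalently $\ell \leq \log t$. These are precisely $\ell \in \{0, 1, \dots, \lfloor \log t \rfloor\}$, giving $|\mathcal{I}_t| = \lfloor \log t \rfloor + 1$ as claimed. There is no real obstacle here; the only mild subtlety worth flagging is that the dyadic interval set is defined starting from $k\geq 1$ rather than $k\geq 0$, so the first interval at level $\ell$ begins at $2^\ell$ rather than at $0$, which is exactly what ensures $t$ is missed at all levels $\ell$ with $2^\ell > t$.
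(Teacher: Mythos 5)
Your proof is correct and takes essentially the same approach as the paper: count the levels $\ell$ at which $t$ lies in some interval, observing that $t$ is covered at level $\ell$ exactly when $2^\ell \leq t$ (since the interval $[0,2^\ell-1]$ is excluded), yielding $\lfloor \log t\rfloor + 1$ levels.
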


\begin{lemma}
    The algorithm $\Alg_{\mathsf{log}}$ given in Algorithm~\ref{alg:simpe_log_privacy_degradation} satisfies $\epsilon$-differential privacy with expiration $g$, where $g:\mathbb{N}\rightarrow \mathbb{R}$ is defined as $g(x)=2\log(x+1)+2$.
    It incurs a maximum additive error of $O(\epsilon^{-1}\log (T/\beta))$ over all $T$ time steps simultaneously with probability at least $1-\beta$.
\end{lemma}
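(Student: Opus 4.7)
The plan is to use Fact~\ref{fact:varialbe_shift} to prove privacy via a shift of the Laplace variables constructed from the dyadic decomposition of the affected suffix, and to use standard Laplace concentration for utility, analogous to the utility analysis of the binary tree mechanism.

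\textbf{Privacy.} Fix $\tau \geq 1$, $j \leq \tau$, and neighboring streams $x$ and $x'$ with $|x_j - x'_j| \leq 1$. Let $D_{[j,\tau]} \subseteq \mathcal{I}$ be the dyadic decomposition of $[j, \tau]$ from Fact~\ref{fact:dyadic_decomposition}. Define a shift $\Delta$ indexed by $\mathcal{I}$ by $\Delta_I = x_j - x'_j$ if $I \in D_{[j,\tau]}$ and $\Delta_I = 0$ otherwise, and set $q(z) = z + \Delta$. I will check that $\Alg_{\mathsf{log}}(x; z) = \Alg_{\mathsf{log}}(x'; q(z))$ at each $t \leq \tau$. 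For $t < j$ the streams agree up to time $t$, and since every $I \in D_{[j,\tau]}$ is contained in $[j, \tau]$ none of them contains $t$, so both outputs coincide. For $t \in [j, \tau]$, the stream difference contributes $x_j - x'_j$ at time $t$, while by Fact~\ref{fact:dyadic_decomposition}(i)--(ii) exactly one interval $I \in D_{[j,\tau]}$ satisfies $t \in I$, contributing $\Delta_I = x_j - x'_j$ to the noise difference and cancelling the stream difference. Applying Lemma~\ref{lem:lap_shift} with $b_I = 1/\epsilon$ gives total cost $s \leq \epsilon |D_{[j,\tau]}|$, and Fact~\ref{fact:dyadic_decomposition}(iii) bounds $|D_{[j,\tau]}| \leq 2\lfloor \log(\tau - j + 1)\rfloor + 2 \leq 2\log(\tau - j + 1) + 2 = g(\tau - j)$. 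Hence $s \leq \epsilon g(\tau - j)$, and Fact~\ref{fact:varialbe_shift} yields $\epsilon$-DP with expiration $g$.

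\textbf{Utility.} At any time $t$, the additive error equals the sum of $|\mathcal{I}_t| = \lfloor \log t \rfloor + 1$ i.i.d.\ $\Lap(1/\epsilon)$ variables (Fact~\ref{fact:dyadic_levels}). Since sums of independent Laplace variables are sub-exponential, standard Bernstein-type concentration gives that this sum has absolute value $O\bigl(\sqrt{\log(t)\log(T/\beta)}/\epsilon + \log(T/\beta)/\epsilon\bigr) = O(\log(T/\beta)/\epsilon)$ with probability at least $1 - \beta/T$. A union bound over the $T$ time steps completes the utility claim.

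The only real subtlety is verifying that the shift $\Delta$ preserves outputs at times $t < j$, which works precisely because $D_{[j,\tau]}$ is contained in $[j,\tau]$ and therefore intersects only the times at which we actually need to cancel the $x_j - x'_j$ contribution; every other step is a direct application of Facts~\ref{fact:varialbe_shift}, \ref{fact:dyadic_decomposition}, and~\ref{fact:dyadic_levels} together with Lemma~\ref{lem:lap_shift}.
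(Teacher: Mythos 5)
Your proof is correct and follows essentially the same route as the paper's: you use Fact~\ref{fact:varialbe_shift} with the shift $q$ determined by the dyadic decomposition $D_{[j,\tau]}$, invoke Lemma~\ref{lem:lap_shift} and Fact~\ref{fact:dyadic_decomposition} to bound the privacy loss by $2\lfloor\log(\tau-j+1)\rfloor+2 \le g(\tau-j)$, and use Fact~\ref{fact:dyadic_levels} plus Laplace-sum concentration for the utility bound. Your sign convention ($\Delta_I = x_j - x'_j$) is the correct one, and the only cosmetic difference is that the paper cites the Chan et al.\ measure-concentration lemma explicitly (Corollary~\ref{cor:sum_of_eq_lap}) where you invoke the equivalent Bernstein-type bound for sums of independent Laplace variables.
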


%

\textbf{Privacy.}
We use Fact~\ref{fact:varialbe_shift} and Lemma~\ref{lem:lap_shift} to argue privacy of $\Alg_{\mathsf{log}}$: Let $x$ and $x'$ differ at time $j$. Note that the prefix sums fulfill the following properties:
(i) $\sum_{i=0}^t x_i=\sum_{i=0}^t x'_i$ for all $t<j$ and (ii) $\sum_{i=0}^t x'_i = \sum_{i=0}^t x_i+y$ for all $t\geq j$, where $y=x'_j-x'_j \in [-1,1]$.

In the following, we refer to the output of the algorithm run on input $x$ and with values of the random variables $z$ as $\Alg_{\mathsf{log}}(x;z)$.
Let $\tau \geq j$ be given, and consider $S\subseteq\range(\Alg_{\mathsf{log}})^{*}$. %
Let $Z=(\randvar_I)_{I\in\mathcal{I}}$ be the sequence of Laplace random variables used by the algorithm. For any fixed output sequence $s\in S$, let $z=(z_I)_{I\in\mathcal{I}}$ be a sequence of values that the Laplace random variables need to assume to get output sequence $s$ for input $x$. That is $\sum_{i=1}^t x_i + \sum_{I\in\mathcal{I}_t} z_I=s_t$ for $t\geq 1$. Let $D_{[j,\tau]}$ be the decomposition of $[j,\tau]$ as defined in Fact~\ref{fact:dyadic_decomposition}.  
We define a bijection $q$ satisfying the properties of Fact~\ref{fact:varialbe_shift} as follows: $q(z)=z'=(z'_I)_{I\in\mathcal{I}}$ such that 
\begin{align*}
    z'_I = 
    \begin{cases}
    z'_I=z_I &\forall I\notin D_{[j,\tau]} \\
    z'_I=z_I+y &\forall I\in D_{[j,\tau]}
    \end{cases}. 
\end{align*} We show the two properties needed to apply Fact~\ref{fact:varialbe_shift}:

    (1) Note that $(\Alg_{\mathsf{log}}(x;z))_t=\sum_{i=1}^t x_i+\sum_{I\in \mathcal{I}_t} z_I$.
    For $t< j$, we have $t\notin [j,\tau]$  and therefore $t\notin I$ for any $I\in D_{[j,\tau]}$. Therefore, we have \begin{align*}
        \sum_{i=1}^t x_i+\sum_{I\in \mathcal{I}_t} z_I&=\sum_{i=1}^t x'_i+\sum_{I\in \mathcal{I}_t} z_I=\sum_{i=1}^t x'_i+\sum_{I\in \mathcal{I}_t} z'_I.
    \end{align*} For $j\leq t\leq \tau$, we have that $t$ is contained in exactly one $I\in D_{[j,\tau]}$. %
    Thus, $z'_I=z_I+y$ for exactly one $I\in\mathcal{I}_t$, and $z'_{I'}=z_{I'}$ for all $I'\in \mathcal{I}_t\backslash\{I\}$. 
    Further, since $t\geq j$, we have that $\sum_{i=1}^t x_i=\sum_{i=1}^t x'_i-y$. 
    Together, this shows the first property of Fact~\ref{fact:varialbe_shift} as
    \begin{align*}
        \sum_{i=1}^t x_i+\sum_{I\in \mathcal{I}_k} z_I &=\sum_{i=1}^t x'_i-y+\sum_{I\in \mathcal{I}_t} z'_I+y =\sum_{i=1}^t x'_i+\sum_{I\in \mathcal{I}_t} z'_I,
    \end{align*} 

    (2) By Fact~\ref{fact:dyadic_decomposition}, $|D_{[j,\tau]}|\leq 2(\log(\tau-j+1)+1)$. Thus, by Lemma~\ref{lem:lap_shift} for any $\mathcal{N}\in \support(Z)$,
    \begin{align*}
        \Pr_{z\in Z}[z\in\mathcal{N}] &\leq e^{\sum_{I\in D_{[j,\tau]}}|y|\epsilon}\Pr_{z\in Z}[z\in q(\mathcal{N})] \leq e^{2(\log(\tau-j+1)+1)\epsilon}\Pr_{z\in Z}[z\in q(\mathcal{N})],
    \end{align*}
    so the second property of Fact~\ref{fact:varialbe_shift} is fulfilled with $g(x)=2\log(x+1)+2$.
By Fact~\ref{fact:varialbe_shift}, we have differential privacy with privacy expiration $g(x)=2\log(x+1)+2$.%
\paragraph{Accuracy.}
To show accuracy at step $t$, let $Y_t=\sum_{I\in\mathcal{I}_t}\randvar_I$, i.e. the noise added at time step $t$.
By Fact~\ref{fact:dyadic_levels} we add $k=|\mathcal{I}_t| = \lfloor\log t\rfloor + 1$ Laplace noises with scale ${1\over \epsilon}$.
Let $M_{t,\beta}=\max\paren{\sqrt{\lfloor\log t\rfloor + 1},\sqrt{\ln(2/\beta)}}$.
By Corollary~\ref{cor:sum_of_eq_lap}, we have that 
$
    \Pr\left[|Y_t|>{2\over\epsilon}\sqrt{2\ln(2/\beta)}M_{t,\beta}\right]\leq \beta,
$ for any $\beta<1$.
Setting $\beta=\beta'/T$, it follows that with probability at least $1-\beta'$,  $|Y_t|=O(\epsilon^{-1}\log(T/\beta'))$ for all time steps $t \leq T$ simultaneously.
\section{Proof of Theorem~\ref{thm:main_upper} and Corollary~\ref{cor:main}}\label{sec:main_proof}

Section~\ref{sec:log-decay} shows that we can obtain an error smaller than the binary mechanism by using differential privacy with a logarithmic expiration function. Here we show a general trade-off between the expiration function's growth and the error's growth. %
Two techniques are needed to show the theorem:
    
    {\it Delay.} All outputs are delayed for $B$ steps. 
    That is, in the first $B$ steps, the mechanism outputs $0$, thereafter, it outputs a private approximation of $\sum_{i=1}^{j-B} x_i$.
    Delay introduces an extra additive error of up to $B$, but ensures perfect privacy for the first $B$ time steps after receiving an input.

{\it Budgeting across levels.} The privacy budget is split unevenly across levels of the dyadic interval set
    in order to control the asymptotic growth of the expiration function.
    Specifically, the budget at level $\ell$ is chosen to be proportional to $(\ell+1)^{\lambda - 1}$.
    The case $\lambda = 1$ corresponds to the even distribution used in the construction of Section~\ref{sec:log-decay}.
%

\begin{algorithm}[t]
\begin{algorithmic}[1]
    \STATE {\bf Input:} A stream $x_1, x_2\cdots$, privacy parameter $\epsilon$,\\
    parameters $B \in \mathbb N$, $\lambda \in \mathbb R_{>0}\backslash\{3/2\}$

    \STATE Let $\mathcal{I}$ be the dyadic interval set on $[1,\infty)$ and $\mathcal{I}_t=\{I\in\mathcal{I}:t - B\in I\}$ \;
    
    \STATE $\forall\ell$, $I\in\mathcal{I}^\ell$, lazily draw i.i.d.\ %
    $\randvar_I\sim\Lap\left({(1+\ell)^{1-\lambda}\over \epsilon} \right)$ \;

    \FOR{$t=1$ {\bf to} $B$}
    
   \STATE At time $t$, output $0$\;
    \ENDFOR
    \FOR{$t=B+1$ {\bf to} $\infty$}
   \STATE At time $t$, output $s_{t-B} = \sum_{i=1}^{t-B} x_i+\sum_{I\in\mathcal{I}_{t-B}}\randvar_I$\;
    \ENDFOR
    \end{algorithmic}
    \caption{Continual counting, gradual privacy expiration}
\label{alg:privacy_degradation}
\end{algorithm}

\noindent
Our algorithm is shown as Algorithm~\ref{alg:privacy_degradation}.
In the following, we refer to the output of the algorithm run on input $x$ and with values of the random variables $z$ as $\Alg(x;z)$.

\textbf{Privacy.}
We now show that $\Alg$ satisfies Definition~\ref{def:decayed}.
For $x$ and $x'$ that differ (only) at time $j\leq \tau$, the prefix sums fulfill the following properties:
\begin{itemize}
    \item $\sum_{i=0}^t x_i=\sum_{i=0}^t x'_i$ for all $t<j$ and
    \item $\sum_{i=0}^t x'_i = \sum_{i=0}^t x_i+y$ for all $j\leq t \leq \tau$, where $y=x'_j-x_j \in [-1,1]$.
\end{itemize}
Let $j$ and $\tau$ be defined as in Definition~\ref{def:decayed}.
Due to the delay, if $\tau' = \tau - B < j$
(corresponding to $d < B$) the privacy claim is immediate since the output distributions of $\Alg$ up to step $\tau$ are identical on the two inputs. Otherwise, for $\tau' = \tau - B \ge j$, i.e., for $d \geq B$, we wish to use Fact~\ref{fact:varialbe_shift} and Lemma~\ref{lem:lap_shift}.
Let $Z=(\randvar_I)_{I\in\mathcal{I}}$ be the sequence of Laplace random variables used by $\Alg$. 
For input $x$ consider a fixed length-$\tau$ output sequence consisting of $B$ zeros followed by $s_1,\dots,s_{\tau'}$.
Let $z=(z_I)_{I\in\mathcal{I}}$ be a sequence of values for the Laplace random variables in order to produce this output sequence with input $x$. 
That is, $s_t = \sum_{i=1}^t x_i + \sum_{I\in\mathcal{I}_t} z_I$ for $t\geq 1$.
Let $D_{[j,\tau']}$ be the decomposition of $[j,\tau']$ as defined in Fact~\ref{fact:dyadic_decomposition}.  
We define a bijection $q$ satisfying the properties of Fact~\ref{fact:varialbe_shift} as follows: $q(z)=z'=(z'_I)_{I\in\mathcal{I}}$ such that 
\begin{align*}
    z'_I = 
    \begin{cases}
    z'_I=z_I &\forall I\notin D_{[j,\tau']} \\
    z'_I=z_I+y &\forall I\in D_{[j,\tau']}
    \end{cases}. 
\end{align*} 

We show the two properties needed to apply Fact~\ref{fact:varialbe_shift}:
\begin{lemma}\label{lem:mainalgdecay}
    The function $q(z)$ satisfies  $\Alg(x;z)=\Alg(x';q(z))$ and for $g(d)=O(1+\frac{1}{\lambda}\log^{\lambda}(d-B+1))$ we have
         $\Pr_{z\sim Z}[z\in \mathcal{N}]\leq e^{\epsilon g(\tau-j)}\Pr[z'\in \mathcal{N}]$ for all $\mathcal{N}\subseteq \support(Z)$.
\end{lemma}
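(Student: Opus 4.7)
The two claims correspond to the two hypotheses of Fact~\ref{fact:varialbe_shift}, so the plan is to verify them separately using the structural properties of the dyadic decomposition (Fact~\ref{fact:dyadic_decomposition}) together with the Laplace shift lemma (Lemma~\ref{lem:lap_shift}).

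\textbf{Output equality.} For the first claim I would check $\mathcal{A}(x; z) = \mathcal{A}(x'; q(z))$ by splitting on the release time $t \in \{1, \dots, \tau\}$ into three cases. For $t \leq B$ both executions output $0$ by definition of $\mathcal{A}$. For $B < t < j+B$, the prefix sums up to index $t-B < j$ agree on $x$ and $x'$, and every interval in $\mathcal{I}_{t-B}$ lies strictly to the left of $j$ whereas every interval in $D_{[j,\tau']}$ lies inside $[j,\tau']$; hence the two families are disjoint, $z_I = z'_I$ for all $I \in \mathcal{I}_{t-B}$, and the outputs coincide. For $j+B \leq t \leq \tau$, the prefix sum on $x'$ exceeds that on $x$ by $y = x'_j - x_j$; since the intervals of $D_{[j,\tau']}$ are disjoint and partition $[j,\tau']$, the point $t-B$ belongs to exactly one interval $I^\star \in \mathcal{I}_{t-B} \cap D_{[j,\tau']}$, and the associated shift of $z_{I^\star}$ (with the sign chosen so that the noise offset cancels the prefix-sum offset) restores equality of the two outputs.

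\textbf{Probability ratio.} For the second claim I would apply Lemma~\ref{lem:lap_shift} to the shift vector $\Delta \in \mathbb{R}^{\mathcal{I}}$ that equals $\pm y$ on $D_{[j,\tau']}$ and $0$ elsewhere, noting $|y| \leq 1$ and that for $I \in \mathcal{I}^{\ell}$ the scale is $b_I = (1+\ell)^{1-\lambda}/\varepsilon$. This gives
\[
s \;=\; \sum_{I \in D_{[j,\tau']}} \frac{|\Delta_I|}{b_I} \;\leq\; \varepsilon \sum_{I \in D_{[j,\tau']}} (1+\ell_I)^{\lambda - 1}.
\]
Fact~\ref{fact:dyadic_decomposition} furnishes at most two intervals per level and a maximum level $L \leq \log(\tau - j - B + 1)$, so this is bounded by $2\varepsilon \sum_{\ell=0}^{L}(1+\ell)^{\lambda - 1}$. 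An integral comparison then yields $\sum_{\ell=0}^{L}(1+\ell)^{\lambda-1} = O((1+L)^\lambda / \lambda)$ for every $\lambda > 0$. Substituting $d = \tau - j$ and absorbing additive constants into the leading ``$1$'' of $g$, this gives $s \leq \varepsilon \cdot O\bigl(1 + \log^{\lambda}(d - B + 1)/\lambda\bigr)$, i.e.\ the claimed bound on the expiration function; Lemma~\ref{lem:lap_shift} then upgrades this to the desired probability-ratio inequality for every $\mathcal{N} \subseteq \support(Z)$.

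\textbf{Main obstacle.} None of the steps is analytically deep. The two places to be careful are (i) choosing the sign of $\Delta$ in $q$ so that the single shifted noise in $\mathcal{I}_{t-B} \cap D_{[j,\tau']}$ cancels, rather than reinforces, the prefix-sum difference $y$; and (ii) the power-sum estimate, where the $1/\lambda$ factor must be tracked explicitly (this is what distinguishes the uneven budget split from the uniform case $\lambda = 1$, and it cleanly explains why the bound degrades as $\lambda \downarrow 0$).
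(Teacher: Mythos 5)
Your proposal matches the paper's proof essentially step for step: both verify the two hypotheses of Fact~\ref{fact:varialbe_shift} by combining the dyadic decomposition bound (Fact~\ref{fact:dyadic_decomposition}, at most two intervals per level up to level $\lfloor\log(\tau'-j+1)\rfloor$), the Laplace shift lemma (Lemma~\ref{lem:lap_shift}), and an integral comparison of $\sum_{\ell}(1+\ell)^{\lambda-1}$. Your ``main obstacle (i)'' is a well-taken caveat — the paper's written definition $z'_I=z_I+y$ and its subsequent cancellation calculation are not literally consistent (the latter implicitly uses $z'_I=z_I-y$), but since only $|\Delta_I|$ enters Lemma~\ref{lem:lap_shift} this sign-of-$y$ bookkeeping has no effect on the stated bound.
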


\textbf{Space and time.}
Algorithm~\ref{alg:privacy_degradation} can update the sums $\sum_{i=1}^{t-B} x_i$ and $\sum_{I\in\mathcal{I}_t}\randvar_I$ in each time step $t$ using the following idea:
The $B$ most recent inputs are kept in a buffer to allow calculation of prefix sums with delay and also the $\lfloor\log(t)\rfloor + 1$ random variables of those values that were added to the most recent output.
At a given step, each random value that is no longer used is subtracted from the most recent output, and each new random value is added.
An amortization argument as in the analysis of the number of bit flips in a binary counter yields the $O(1)$ amortized bound.

\textbf{Accuracy.}
To show the accuracy guarantee, we need to account for the error due to delay as well as the noise required for privacy.
It is easy to see that the delay causes an error of at most $B$, since the sum of any $B$ inputs is bounded by $B$.
Thus, for both error bounds it remains to account for the error due to noise.
At every time step $t$ after $B$, the output is the delayed prefix sum plus a sum of Laplace distributed noise terms as indicated by $\mathcal{I}_{t-B}$ with parameters $b_\ell = (\ell + 1)^{1-\lambda}/\epsilon$, where $\ell = 0,\dots,\lfloor\log(t-B)\rfloor$.
To bound the variance of the noise, $2\sum_\ell b_\ell^2$ we compute: 
\begin{alignat*}{1}
    \sum_{\ell=0}^{\lfloor\log(t-B)\rfloor} b_\ell^2 & = {1 \over \epsilon^2}\sum_{\ell=0}^{\lfloor\log(t-B)\rfloor} (\ell + 1)^{2(1-\lambda)}
     \leq {1 \over \epsilon^2} \left(1 + \int\limits_{1}^{\log(t)+2} x^{2(1-\lambda)} \mathsf dx\right)\\
    & \leq {1 \over \epsilon^2} \left(1 + \bigg[\frac{1}{3 - 2\lambda} x^{3 - 2\lambda}\bigg]_{x=1}^{x=\log(t) + 2} \right)
     = O\left( {1+\log(t)^{3-2\lambda} \over \varepsilon^2} \right) \enspace .
\end{alignat*}

This calculation assumes $\lambda \ne 3/2$, proving the statement on the squared error in Theorem~\ref{thm:detailed_main_upper}.
For the high probability bound we invoke Lemma~\ref{lem:sum_of_lap} with
 $  b_M  = \max_\ell(b_\ell) = \max ( 1, \log(2t)^{1-\lambda}) / \varepsilon ) = O(( 1 + \log(t)^{1-\lambda}) / \varepsilon).$
For 
$\nu = \sqrt{\sum_\ell b^2_\ell} + b_M \sqrt{\log t} = O(b_M \sqrt{\log t})$, applying Lemma~\ref{lem:sum_of_lap} says that the error from the noise is $O(\nu \sqrt{\log(1/\beta)}) = O(\sqrt{\log(1/\beta)}\log(t)^{\max(0.5, 1.5-\lambda)})$ with probability $1-\beta$,  proving Theorem~\ref{thm:detailed_main_upper}.

\paragraph{Proof of Corollary~\ref{cor:main}.}
For releasing $T$ outputs, choosing $\beta=1/T^{c+1}$, $c>0$ being a constant, and using a union bound over all outputs gives a bound on the maximum noise equal to $O(\log(T)^{\max(1,2-\lambda)} / \varepsilon)$ with probability $1-1/T^c$, proving Corollary~\ref{cor:main}.
\section{Lower Bound on the Privacy Decay}%
\label{sec:lowerbound}
The lower bound follows from a careful packing argument. The proof is deferred to Appendix~\ref{sec:omitted_lowerbound}.
\begin{theorem}\label{thm:lower}
    Let $\Alg$ be an algorithm for binary counting for streams of length $T$ 
    which satisfies $\epsilon$-differential privacy with expiration $g$. Let $C$ be an integer and assume that the additive error of $\Alg$ is bounded by $C<T/2$ at all time steps with a probability of at least 2/3.  Then     %
    \begin{align*}
        \sum_{j=0}^{2C-1}g(j) \geq \log(T/6C)/\epsilon
    \end{align*}
\end{theorem}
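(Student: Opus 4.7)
The bound has the shape of a packing lower bound, so my plan is to exhibit many streams whose prefix sums cannot be jointly approximated well by the same output distribution, and then bound how many such streams can coexist using the expiration version of group privacy. The target size of the family is $M \approx T/(6C)$ and the target ``cost'' of separating any one of them from the baseline is $\sum_{j=0}^{2C-1} g(j)\cdot \epsilon$.

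\emph{Construction.} Put $L := 2C$ and take $x^{(0)}$ the all-zero stream, and for $k=1,\ldots,M$ let $x^{(k)}$ be the indicator of a block $B_k$ of $L$ consecutive ones positioned at $p_k := 3L(k-1)+1$, so the blocks are pairwise disjoint, separated by $2L$ zeros, and lie inside $[1,T]$. Let $\tau_k := p_k + L - 1$ be the block endpoint. The true prefix sum at $\tau_k$ under $x^{(k')}$ equals $L$ for $k' \le k$ and $0$ for $k' > k$; in particular $L$ under $x^{(k)}$ and $0$ under $x^{(0)}$.

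\emph{Labelling.} I would use the labelling $\ell(s) := \min\{k \ge 1 : s_{\tau_k} > C\}$ (with $\ell(s) := 0$ if no such $k$). The events $\{\ell = k\}_{k=1}^M$ are pairwise disjoint. Under $x^{(k)}$, the simultaneous accuracy event (of probability $\ge 2/3$) forces $s_{\tau_{k'}} \le C$ for every $k' < k$ (count $0$) and $s_{\tau_k} > C$ (count $L$), so $\ell(s) = k$; under $x^{(0)}$ it forces $\ell(s) = 0$. Hence $\Pr_{x^{(k)}}[\ell = k] \ge 2/3$ and $\Pr_{x^{(0)}}[\ell \ge 1] \le 1/3$. (The tie at $s_{\tau_k} = C$ when $L = 2C$ is the one subtle point; see the obstacle below.)

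\emph{Group privacy and packing.} The streams $x^{(0)}$ and $x^{(k)}$ differ in exactly the $L$ coordinates of $B_k$. Chaining Definition~\ref{def:decayed} one coordinate at a time along a path of $L$ intermediate streams yields, for any measurable event $E$ on outputs up to $\tau_k$,
\[
\Pr_{x^{(k)}}[E] \;\le\; \exp\!\Big(\epsilon \sum_{i=0}^{L-1} g\big(\tau_k - (p_k+i)\big)\Big)\,\Pr_{x^{(0)}}[E] \;=\; \exp\!\Big(\epsilon \sum_{j=0}^{L-1} g(j)\Big)\,\Pr_{x^{(0)}}[E],
\]
since the decay arguments $\tau_k - (p_k+i) = L-1-i$ sweep $\{0,1,\ldots,L-1\}$. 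Applying this with $E = \{\ell = k\}$ (which depends only on outputs at $\tau_1 < \cdots < \tau_k$) and combining with the disjointness of the $\{\ell = k\}$ and the bound $\Pr_{x^{(0)}}[\ell \ge 1] \le 1/3$ gives
\[
\tfrac{1}{3} \;\ge\; \sum_{k=1}^M \Pr_{x^{(0)}}[\ell=k] \;\ge\; e^{-\epsilon S}\sum_{k=1}^M \Pr_{x^{(k)}}[\ell=k] \;\ge\; \tfrac{2M}{3}\,e^{-\epsilon S},
\]
with $S := \sum_{j=0}^{L-1} g(j) = \sum_{j=0}^{2C-1} g(j)$. Rearranging yields $S \ge \epsilon^{-1}\log(2M) \ge \epsilon^{-1}\log(T/(6C))$, as claimed.

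\emph{Main obstacle.} The most delicate step is the labelling argument when $L = 2C$, because the accuracy intervals $[-C, C]$ (under $x^{(0)}$) and $[C, 3C]$ (under $x^{(k)}$) touch at $C$, so the strict event $\{s_{\tau_k} > C\}$ may leak probability at the tie. The cleanest fix is to take $L = 2C+1$ and absorb the resulting constant-factor loss in $M$ (and the extra $g(2C)$ term) into the bound, which still matches the theorem up to constants; alternatively one can sharpen the accuracy bookkeeping with strict versus non-strict inequalities. The other step that deserves care is the additive composition of privacy losses under expiration, but this follows by a one-position-at-a-time induction directly from Definition~\ref{def:decayed}.
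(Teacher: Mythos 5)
Your proof takes the same packing-plus-group-privacy route as the paper: a family of streams obtained from the all-zero stream by flipping a single block of roughly $2C$ ones, disjoint output events detecting which block was flipped, and a one-coordinate-at-a-time composition of Definition~\ref{def:decayed} that produces the factor $\exp\bigl(\epsilon\sum_{j=0}^{2C-1}g(j)\bigr)$. The one structural difference is cosmetic: the paper packs the blocks contiguously at $[2C(i-1)+1,\,2Ci]$ with no gaps, which already suffices for disjointness of the detection events and gives a slightly larger family, whereas your $2L$-gap layout simply forfeits some streams without buying anything. The tie you flag is real, and in fact the paper shares the same subtlety: its $S_i$ is defined with strict inequalities $a_{2Ci}>C$ and $a_{2Ck}<C$, while ``error bounded by $C$'' read as $\le C$ only yields the non-strict $a_{2Ci}\ge C$ and $a_{2Ck}\le C$, so the step asserting $\Pr_{x^{(i)}}[S_i]\ge 2/3$ is not airtight as written. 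Your fix of taking block length $2C+1$ works but shifts the conclusion to $\sum_{j=0}^{2C}g(j)$; alternatives are to read the error hypothesis as a strict inequality, or to observe that the tie events have probability zero for any mechanism with a continuous output density. Up to this bookkeeping your argument is correct and matches the paper's intended proof.
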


The lower bound extends to mechanisms running on unbounded streams.
By Definition~\ref{def:decayed}, $g(j)$ is non-decreasing in $j$. This immediately gives \Cref{thm:main_lower}.
\section{Empirical Evaluation}\label{sec:empirical}
We empirically evaluated (i) how the privacy loss increases as the elapsed time increases for Algorithm~\ref{alg:privacy_degradation}, (ii) how tightly the corresponding theoretical expiration function $g$ of Theorem~\ref{thm:detailed_main_upper} bounds this privacy loss, and (iii) how this privacy loss compares to the privacy loss of a realistic baseline.
As different algorithms have different parameters that can affect privacy loss, we use the following approach to perform a fair comparison:
In the design of $\epsilon$-differentially private algorithms the error of different algorithms is frequently measured with the same value of the privacy loss parameter $\epsilon$. Here, we turn this approach around: {\em  We compare the privacy loss (as a function of elapsed time) of different algorithms whose privacy parameter $\epsilon$ is chosen to achieve the same error.}

We empirically compute the privacy loss for Algorithm~\ref{alg:privacy_degradation} by considering the exact dyadic decompositions used for the privacy argument (see Section~\ref{sec:empirical_details} for details).
As a baseline to compare against, we 
break the input stream into intervals of length $W$, run the 'standard' binary mechanism $\mathcal{A}_B$ of Chan et al.~\citet{chan2011private} with a privacy parameter $\epsilon_{cur}$ on the current interval, and compute the sum of all prior intervals with a different privacy guarantee $\epsilon_{past}$.
As for both algorithms that we evaluate it is straightforward to compute the \emph{mean-squared error} (MSE) for all outputs on a stream of length $T$, while the corresponding maximum absolute error can only be observed empirically, we fix the MSE for $T=d_{max} + 1$, where $d_{max}$ is the greatest $d$ (on the $x$-axis) shown in each plot.
We normalize each plot to achieve the same MSE over the first $T$ outputs, across all algorithms and parameter choices.
For all runs of Algorithm~\ref{alg:privacy_degradation} we used $B=0$, as for larger values of $B$, the primary effect would be to shift the privacy loss curve to the right.
We picked the MSE to be $1000$ for all plots as it leads to small values of the empirical privacy loss.
As for both algorithms, the privacy loss {\em does not} depend on the input data; we used an all-zero input stream, a standard approach in the industry (see, for example, Thakurta's~\citet{thakurta2017differential} plenary talk at USENIX, 2017).
\begin{figure*}[htb]
\centering
\subfigure[Alg.~\ref{alg:privacy_degradation} vs. $g$ in Theorem~\ref{thm:main_upper}.]{\includegraphics{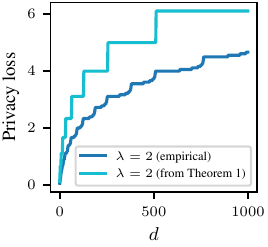}\label{fig:main_algorithm1_empirical_vs_theorem}}
\subfigure[Alg.~\ref{alg:privacy_degradation} and baseline for various $\lambda$, resp.,~$W$.]{\includegraphics{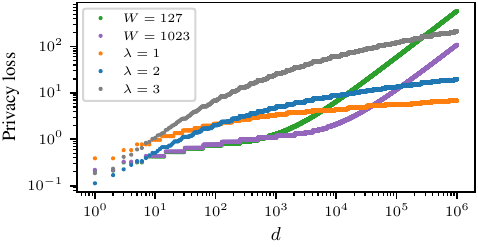}\label{fig:main_plot}}
\caption{Plots on the privacy loss for our \Cref{alg:privacy_degradation} and a baseline algorithm.}
\end{figure*}
\Cref{fig:main_algorithm1_empirical_vs_theorem} shows that $g$ from Theorem~\ref{thm:main_upper} is a good approximation of the empirical privacy loss, and that both exhibit the same polylogarithmic growth. 
\Cref{fig:main_plot} shows that for large enough $d$ our algorithm has lower privacy loss than the baseline algorithm.
See more details in the appendix.

\section{Conclusion}\label{sec:conclusion}

In this work, we give the first algorithm for the continual counting problem for privacy with expiration for a wide range of expiration functions and characterize for which expiration functions it is possible to get an $\ell_{\infty}$-error of $O((\log T)/\epsilon)$. We also give a general lower bound for any such algorithms and show that our is tight for certain expiration functions.
It would be interesting to study such an algorithm further, i.e., slower-growing expiration functions, and also algorithms for other problems in continual observation, such as maintaining histograms and frequency-based statics over changing data.
One of the main applications of continual counting algorithms is in privacy-preserving federated learning algorithms, specifically in stochastic gradient descent (see e.g.~\cite{denisov2022improved,kairouz2021practical, mcmahan2022federated}).
It would be interesting to explore how our algorithm can be deployed in this setting.

Though the concept of privacy expiration has not been defined for approximate differential privacy, it is natural to wonder if there exist analogous results in this setting, which, in general, allows better privacy-utility trade-offs.
We note that for $\rho$-zero-concentrated differential privacy~\cite{bun2016concentrated} there is a natural analog of Definition~\ref{def:decayed} for which it seems possible to prove results analogous to those shown here for pure differential privacy.

\section{Acknowledgements}
\erclogowrapped{5\baselineskip}Monika Henzinger:  This project has received funding from the European Research Council (ERC) under the European Union's Horizon 2020 research and innovation programme (Grant agreement No.\ 101019564) and the Austrian Science Fund (FWF) grant DOI 10.55776/Z422, grant DOI 10.55776/I5982, and grant DOI 10.55776/P33775 with additional funding from the netidee SCIENCE Stiftung, 2020–2024.

Joel Daniel Andersson and Rasmus Pagh are affiliated with Basic Algorithms Research Copenhagen (BARC), supported by the VILLUM Foundation grant 16582, and are also supported by Providentia, a Data Science Distinguished Investigator grant from Novo Nordisk Fonden.

Jalaj Upadhyay's research was funded by the Rutgers Decanal Grant no. 302918. 

Teresa Steiner is supported by a research grant (VIL51463) from
VILLUM FONDEN.

\bibliography{main}
\bibliographystyle{plain}

\newpage
\appendix

\section{Empirical Evaluation}\label{sec:empirical}
In this section, we empirically review (i) how the privacy loss increases as the elapsed time increases for Algorithm~\ref{alg:privacy_degradation}, (ii) how tightly the corresponding theoretical expiration function $g$ of Theorem~\ref{thm:detailed_main_upper} bounds this privacy loss, and (iii) how this privacy loss compares to the privacy loss of a realistic baseline.
Note that different algorithms have different parameters that can affect privacy loss; thus, it is not clear how to perform a fair comparison at first.
We use the following approach: In the design of $\epsilon$-differentially private algorithms the error is frequently bound as a function of the privacy loss, i.e., the error of different algorithms is measured with the same value of the privacy loss parameter $\epsilon$. Here, we turn this approach around: {\em  We compare the privacy loss (as a function of elapsed time) of different algorithms whose privacy parameter $\epsilon$ is chosen to achieve the same error.}
As for both algorithms that we evaluate it is straightforward to compute the \emph{mean-squared error} (MSE) for all outputs on a stream of length $T$, while the corresponding maximum absolute error can only be observed empirically, we fix the MSE for $T=d_{max} + 1$, where $d_{max}$ is the greatest $d$ shown in each plot.
We normalize each plot to achieve the same MSE over the first $T$ outputs, across all algorithms and parameter choices.
For all runs of Algorithm~\ref{alg:privacy_degradation} we used $B=0$, as for larger but practical values of $B$, the primary effect would be to shift the privacy loss curve to the right.
We picked the MSE to be $1000$ for all plots as it leads to small values of the empirical privacy loss.
Any other choice would simply rescale the values of the $y$-axes.

Note that for both algorithms the privacy loss {\em does not} depend on the input data and, thus, we used an all-zero input stream, a standard approach in the industry (see, for example, ~\citet{thakurta2017differential}'s plenary talk at USENIX, 2017).
We also emphasize that all computations producing the plots shown are deterministic, and so there is no need for error bars.

\textbf{Empirical privacy expiration for Algorithm~\ref{alg:privacy_degradation}.}
For our evaluation we empirically compute the privacy loss for Algorithm~\ref{alg:privacy_degradation} by considering the exact dyadic decompositions used for the privacy argument.
More specifically, to compute the empirical privacy expiration function $g$ (for $d \geq B$) of Algorithm~\ref{alg:privacy_degradation}, we use the  proof of Theorem~\ref{thm:main_upper} to reason that for $\mathcal{N}\in \support(Z)$, $\Pr_{z\in Z}[z\in\mathcal{N}]$ is bounded by 
\cref{eq:privacyloss}. There
we bound the exponent, i.e.\,the privacy loss, as a function of the size of the interval $d - B + 1 =\tau' - j + 1$, and this yields the $g$ stated. 
However, this might not be exact, as the decomposition of an interval of size $d - B + 1$ does not necessarily involve using $2$ intervals per level up to $\ell^*=\lfloor\log(d - B + 1)\rfloor$.
Instead, for a given $d \geq B$ and $\lambda$, we can for all $\tau' \geq j$, where $\tau' - j + 1 = d - B + 1$, compute the exact associated dyadic decomposition $D_{[j, \tau']}$ and the resulting privacy loss.
For each value of $d$ taking the maximum over these losses over all $\tau' \le t$ gives the worst-case privacy loss at time $t+d$ for an input streamed at time $t$.
\begin{figure*}[htb]
\centering
\subfigure[Alg.~\ref{alg:privacy_degradation} vs. $g$ in Theorem~\ref{thm:main_upper}.]{\includegraphics[width=0.3\columnwidth]{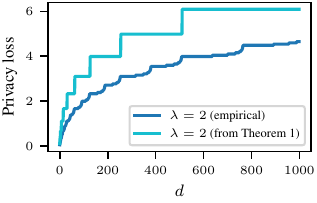}\label{fig:algorithm1_empirical_vs_theorem}}
\subfigure[Algorithm~\ref{alg:privacy_degradation} for multiple $\lambda$.]{\includegraphics[width=0.3\columnwidth]{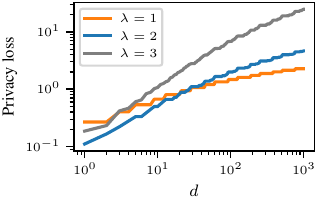}\label{fig:algorithm1_multiple_lambda}}
\subfigure[Baseline for multiple $W$.]{\includegraphics[width=0.3\columnwidth]{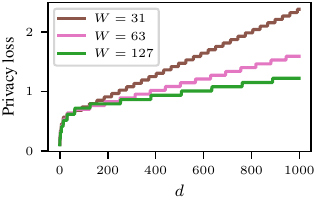}\label{fig:baseline}}
\caption{
Worst-case privacy loss computed empirically for a data item streamed $d$ steps earlier.
}
\end{figure*}

Figure~\ref{fig:algorithm1_empirical_vs_theorem} demonstrates how the privacy loss compares to what is predicted by Theorem~\ref{thm:main_upper}.
The main observation is that $g$ from Theorem~\ref{thm:main_upper} is a good approximation of the empirical privacy loss, and that both exhibit the same polylogarithmic growth.

Figure~\ref{fig:algorithm1_multiple_lambda} plots the empirically computed privacy loss for different choices of $\lambda$.
It demonstrates that the choice of $\lambda$ equates to a trade-off between short-term and long-term privacy. The larger $\lambda$ the higher the privacy loss for large values of $d$, which is to be expected. It also shows that a smaller $\lambda$ provides greater privacy loss early on.

\textbf{A baseline.}
As a baseline to compare against, we 
break the input stream into intervals of length $W$, run the 'standard' binary mechanism $\mathcal{A}_B$ of~\citet{chan2011private} with a privacy parameter $\epsilon_{cur}$ on the current interval, and compute the sum of all prior intervals with a different privacy guarantee $\epsilon_{past}$.
More exactly, the baseline outputs $\mathcal{A}_B(\epsilon_{cur}, t)$ for all time steps $t$ in the first round, i.e., $t \in [1,W]$. For each later round $r>1$, it does the following:
        (1) Compute $c_r = z_{past} + \sum_{t=1}^{(r-1)\cdot W} x_t$ where $z_{past}\sim\Lap(1/\epsilon_{past})$.
        (2) For $t\in [(r-1)\cdot W+1, r\cdot W]$, output $c_r + \mathcal{A}_B(\epsilon_{cur}, t)$ 

Intuitively, this models what is currently done in the software industry: 
Each user is given a ``privacy budget'' for a fixed time interval with the guarantee that their data is no longer used whenever the budget reaches 0 {\em within the current interval}. However, at the beginning of the next interval, the privacy budget is being reset. Thus the total privacy loss is the sum of the privacy losses in all intervals.

What fraction of the privacy budget is spent on the binary mechanism ($\epsilon_{cur}$) and what is spent on releasing the prefix of the past ($\epsilon_{past}$) can be chosen in multiple ways.
For our experiments, we choose a fixed fraction of $\epsilon_{past}/\epsilon_{cur} = 0.1$, implying that we release sums from past rounds with a stronger privacy guarantee compared to what is used in the binary tree.
Other choices and their implication are discussed in Appendix~\ref{sec:empirical_details}.

Figure~\ref{fig:baseline} shows the privacy loss of the baseline for a selection of round sizes.
The main feature to underline is that, after $W$ time steps, the privacy expiration becomes linear in the number of rounds (and therefore in $d$).
This is a direct consequence of privacy composition:
An input $x_t$ impacts the outputs of $\mathcal{A}_B$ in the round $r=\lceil t / W\rceil$ it participates in, and then is subsequently released in each future round $r' > r$ as part of $c_{r'}$.

\textbf{Comparing Algorithm~\ref{alg:privacy_degradation} to the baseline.}
Algorithm~\ref{alg:privacy_degradation} is compared to the baseline in Figure~\ref{fig:comparison}.
Qualitatively the results align with our expectations.
For large enough $d > W$, the baseline enters the region where the degradation in privacy is dominated by the number of rounds that a given input participates in, yielding linear expiration.
Notably, the baseline generally achieves a lower privacy loss for smaller $d$ compared to our method.
This is largely decided by how $\epsilon_{past}/\epsilon_{cur}$ is picked.
Choosing a smaller fraction would lower the slope in the linear regime at the expense of the early privacy loss incurred from $\mathcal{A}_B$.
By contrast, Algorithm~\ref{alg:privacy_degradation} exhibits a comparable initial privacy loss that grows more slowly -- never reaching linear privacy loss -- as predicted by our theorem.
\begin{figure}[ht]
\centering
\includegraphics{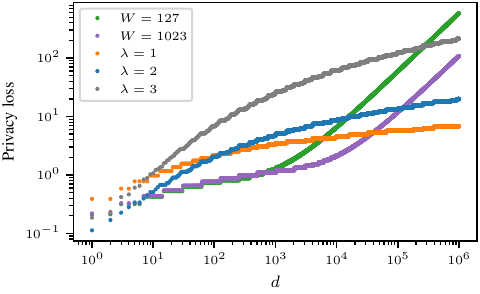}
\caption{
Worst-case privacy loss for a data item streamed $d$ steps earlier, shown for Algorithm~\ref{alg:privacy_degradation}  (with $\lambda=1, 2, 3$) versus the baseline ($W=127$ and $W=1023$).
}\label{fig:comparison}
\end{figure}

\subsection{Technical Details}\label{sec:empirical_details}

\textbf{Choosing $\epsilon_{past}/\epsilon_{cur}$ for the baseline}
For our experiments above, we choose a fraction $\epsilon_{past}/\epsilon_{cur} = 0.1$ for our baseline, but, as we state, other choices are possible.
In particular, one appealing choice is the fraction that minimizes the privacy loss at $d=T-1$, making the baseline as privacy preserving as possible for the last $d$ shown in Figures~\ref{fig:baseline}~and~\ref{fig:comparison}.
Note that the privacy loss for the largest value of $d$ will be, roughly, $\epsilon_{cur} + \epsilon_{past}\cdot (N-1)$, where $N$ is the total number of rounds, corresponding to the privacy loss for an input participating in the first round.
We can compute the fraction $\epsilon_{past}/\epsilon_{cur}$ that minimizes this privacy loss under the constraint of having a fix mean-squared error, yielding a solution that is a function of the round length $W$ and input stream length $T$.

The Figures~\ref{fig:baseline}~and~\ref{fig:comparison} from Section~\ref{sec:empirical} are shown below as Figures~\ref{fig:baseline_opt}~and~\ref{fig:comparison_baseline_opt} where we instead of using $\epsilon_{past}/\epsilon_{cur}=0.1$, we use the fraction minimizing the maximum privacy loss.
\begin{figure}[ht]
\centering
\subfigure[Baseline for multiple $W$.]{\includegraphics[width=0.49\columnwidth]{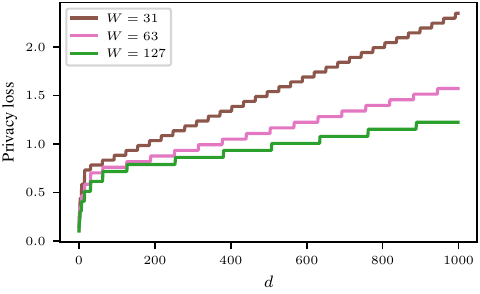}\label{fig:baseline_opt}}
\subfigure[Algorithm~\ref{alg:privacy_degradation}. vs. baseline]{\includegraphics[width=0.49\columnwidth]{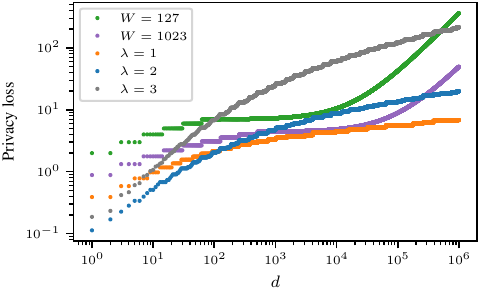}\label{fig:comparison_baseline_opt}}
\caption{
Worst-case privacy loss computed empirically for a data item streamed $d$ steps earlier.
Figure~\ref{fig:baseline_opt} is a re-computation of Figure~\ref{fig:baseline} where the ratio $\epsilon_{past}/\epsilon_{cur}$ is set to minimize the maximum privacy loss, yielding a ratio of $0.069$ for $W=31$, $0.08$ for $W=63$ and $0.095$ for $W=127$.
Figure~\ref{fig:comparison_baseline_opt} is a re-computation of Figure~\ref{fig:comparison} where the ratio $\epsilon_{past}/\epsilon_{cur}$ is set to minimize the maximum privacy loss, yielding a ratio of $0.0064$ for $W=127$ and $0.010$ for $W=1023$.
}
\end{figure}
Note that Figure~\ref{fig:baseline_opt} is almost identical to Figure~\ref{fig:baseline}.
This is due to the fact that for this choice of $W$ and $T$, the optimal fraction is close $0.1$.
In the case of Figure~\ref{fig:comparison_baseline_opt} however, there is a notable difference.
The minimizing fraction is here considerably smaller, resulting in a tangible reduction of the final privacy loss, but at the expense of the privacy loss early on.
This is in line with expectations: spending more of the privacy budget on releasing sums from past rounds implies spending less on releasing the binary tree in each round, which dominates the privacy loss for small $d$.

While the fraction $\epsilon_{past}/\epsilon_{cur}$ has an impact, running the baseline for a fix round length $W$ will always result in linear privacy expiration for a great enough stream length $T$, where $\epsilon_{past}$ affects the slope.

\subsection{Exact Parameters for the Experiments}

In all the plots, we choose the privacy parameter(s) to achieve a mean-squared error of $1000$ over the stream length $T$ in each figure shown, where $T=10^3$ for Figure~\ref{fig:algorithm1_empirical_vs_theorem}, \ref{fig:algorithm1_multiple_lambda} and \ref{fig:baseline}, and $T=10^6$ in Figure~\ref{fig:comparison}.
The corresponding privacy parameters are listed in Table~\ref{tab:experiment_epsilons}, $\epsilon$ refers to the privacy parameter used by Algorithm~\ref{alg:privacy_degradation} and $\epsilon_{cur}$, $\epsilon_{past}$ are the ones used by the baseline.
\begin{table}[htbp]
\centering
\caption{Table over the privacy parameters used in each of the plots.}
\small{\begin{tabular}{ccccc}
Figure & Algorithm & $\epsilon$ &  $\epsilon_{cur}$ & $\epsilon_{past}$ \\ \toprule
\ref{fig:algorithm1_empirical_vs_theorem} & $\lambda=2$ & 0.05542 & - & - \\ \midrule
\multirow{3}{*}{\ref{fig:algorithm1_multiple_lambda}} & $\lambda=1$ & 0.1341 & - & - \\
& $\lambda=2$ & 0.05542 & - & - \\
& $\lambda=3$ & 0.04651 & - & - \\ \midrule
\multirow{3}{*}{\ref{fig:baseline}} & $W=31$ & - & 0.5678 & 0.05678 \\
& $W=63$ & - & 0.6372 & 0.06372 \\
& $W=127$ & - & 0.7197 & 0.07197 \\ \midrule
\multirow{5}{*}{\ref{fig:comparison}} & $W=127$ & - & 0.7387 & 0.07387 \\
& $W=1023$ & - & 1.096 & 0.1096 \\ 
& $\lambda=1$ & 0.1947 & - & - \\
& $\lambda=2$ & 0.05645 & - & - \\
& $\lambda=3$ & 0.04652 & - & - \\ \midrule
\multirow{3}{*}{\ref{fig:baseline_opt}} & $W=31$ & - & 0.7328 & 0.05048 \\
& $W=63$ & - & 0.7031 & 0.05796 \\
& $W=127$ & - & 0.7170 & 0.07252 \\ \midrule
\multirow{5}{*}{\ref{fig:comparison_baseline_opt}} & $W=127$ & - & 6.973 & 0.04488 \\
& $W=1023$ & - & 4.413 & 0.04589 \\ 
& $\lambda=1$ & 0.1947 & - & - \\
& $\lambda=2$ & 0.05645 & - & - \\
& $\lambda=3$ & 0.04652 & - & - \\ \midrule
\end{tabular}
}
\label{tab:experiment_epsilons}
\end{table}

\section{Background on Random Variables}
\begin{definition}[Laplace Distribution]
The \emph{Laplace distribution} centered at $0$ with scale $b$ is the distribution with probability density function %
\begin{align*}
    \densLap{b}(x)=\frac{1}{2b}\exp\left(\frac{-|x|}{b}\right).
\end{align*}
 We use $X\sim \Lap(b)$ or just $\Lap(b)$ to denote a random variable $X$ distributed according to $\densLap{b}(x)$.
\end{definition}

\begin{fact}[Laplace Tailbound]\label{fact:laplace_tailbound}
If $X\sim \Lap(b)$, then 
\begin{align*}
    \Pr[|X| > t\cdot b]\leq e^{-t}.
\end{align*}
\end{fact}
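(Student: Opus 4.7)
The plan is to verify the tail bound by direct integration of the Laplace density. Since $\densLap{b}(x) = \frac{1}{2b}\exp(-|x|/b)$ is symmetric about $0$, for $t \geq 0$ the two-sided tail decomposes as $\Pr[|X|>tb] = 2\,\Pr[X > tb]$, reducing the problem to a one-sided integral over an exponentially decaying function.

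The main calculation is essentially a single substitution. I would write
\[
\Pr[|X| > tb] \;=\; 2\int_{tb}^{\infty} \frac{1}{2b}\, e^{-x/b}\, dx \;=\; \int_{tb}^{\infty} \frac{1}{b}\, e^{-x/b}\, dx,
\]
and then apply the change of variables $u = x/b$, $du = dx/b$, turning the right-hand side into $\int_{t}^{\infty} e^{-u}\, du = e^{-t}$. This gives the stated bound (in fact with equality when $t \geq 0$).

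For completeness, I would handle the trivial case $t < 0$ separately: since $e^{-t} > 1$ and any probability is at most $1$, the inequality $\Pr[|X| > tb] \leq e^{-t}$ is immediate. There is no genuine obstacle in this proof; the claim is a direct consequence of the definition of $\densLap{b}$, and the only ``care'' needed is the factor-of-two from symmetry and the substitution in the exponential. I would present it as a short one-paragraph computation in the appendix.
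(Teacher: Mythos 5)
Your proof is correct; in fact, your computation shows the bound holds with equality for $t \geq 0$, and you rightly note the trivial case $t<0$. The paper itself does not prove this fact (it is stated as a standard background result in the appendix), so there is no authorial proof to compare against; your direct-integration argument is exactly the routine one-paragraph justification that would belong there.
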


\begin{lemma}[Measure Concentration Lemma in \citet{chan2011private}]\label{lem:sum_of_lap}
Let $Y_1,\dots,Y_k$ be independent variables with $Y_i\sim\Lap(b_i)$ for all $i\in[k]$. Denote $b_M=\max_{i\in[k]}b_i$ and $Y=\sum_{i=1}^k Y_i$. Let $0<\beta<1$ and $\nu>\max\left\{\sqrt{\sum_{i=1}^k b_i^2},b_M\sqrt{\ln(2/\beta)}\right\}$. Then,
\begin{align*}
    \Pr\left[|Y|>\nu\sqrt{8\ln(2/\beta)} \right] \leq \beta.
\end{align*}
\end{lemma}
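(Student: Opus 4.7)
The plan is to prove this via a standard Chernoff-style argument using the moment generating function (MGF) of the Laplace distribution, with a case split to handle the constraint on the range of the parameter $t$.

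First, I would recall the MGF of $X \sim \Lap(b)$: for $|t| < 1/b$, $\mathbb{E}[e^{tX}] = 1/(1 - b^2 t^2)$. Restricting further to $|t| \leq 1/(\sqrt{2}\, b)$, so that $b^2 t^2 \leq 1/2$, I can invoke the elementary bound $1/(1-x) \leq e^{2x}$ valid for $x \in [0, 1/2]$ to get $\mathbb{E}[e^{tX}] \leq e^{2 b^2 t^2}$. By independence, writing $\sigma^2 := \sum_{i=1}^k b_i^2$, it follows that for $|t| \leq 1/(\sqrt{2}\, b_M)$,
\[
    \mathbb{E}[e^{tY}] \;=\; \prod_{i=1}^k \mathbb{E}[e^{t Y_i}] \;\leq\; e^{2 t^2 \sigma^2}.
\]

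Next, I would apply Markov's inequality to $e^{tY}$ to obtain $\Pr[Y > a] \leq e^{-ta + 2 t^2 \sigma^2}$. The unconstrained optimum is $t^\star = a/(4 \sigma^2)$, so I split into two cases. When $t^\star$ lies in the admissible range, i.e.\ $a \leq 2\sqrt{2}\, \sigma^2 / b_M$, choosing $t = t^\star$ gives the subgaussian-type bound $\Pr[Y > a] \leq e^{-a^2/(8\sigma^2)}$. Otherwise, I take the boundary value $t = 1/(\sqrt{2}\, b_M)$; using $\sigma^2/b_M^2 \leq a/(2\sqrt{2}\, b_M)$ in this regime yields the subexponential tail $\Pr[Y > a] \leq e^{-a/(2\sqrt{2}\, b_M)}$. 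Combining both cases and applying the same argument to $-Y$ (whose distribution is identical by symmetry of the Laplace law),
\[
    \Pr[|Y| > a] \;\leq\; 2 \exp\!\left(-\min\!\left\{ \tfrac{a^2}{8\sigma^2},\; \tfrac{a}{2\sqrt{2}\, b_M}\right\}\right).
\]

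Finally, I would substitute $a = \nu \sqrt{8 \ln(2/\beta)}$ and write $L = \ln(2/\beta)$. In the first case the exponent becomes $-\nu^2 L / \sigma^2$, which is at most $-L$ since the hypothesis gives $\nu > \sigma$, i.e.\ $\nu^2 / \sigma^2 > 1$. In the second case the exponent becomes $-\nu \sqrt{L}/b_M$, which is again at most $-L$ since $\nu > b_M \sqrt{L}$. Either way, $\Pr[|Y| > a] \leq 2 e^{-L} = \beta$, as required.

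The main obstacle is the case split: a single Chernoff bound does not suffice because the MGF of a Laplace random variable is only finite on a bounded interval, so the naive subgaussian tail fails once $a$ is large relative to $b_M$. The role of the two terms in the hypothesis $\nu > \max\{\sigma, b_M \sqrt{\ln(2/\beta)}\}$ is precisely to guarantee that each of the two regimes delivers a tail of at most $\beta/2$, and verifying this matchup cleanly is where one must be careful with constants (in particular the factor $\sqrt{8}$ in the conclusion).
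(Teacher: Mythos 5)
Your proof is correct. The paper does not give its own proof of this lemma but simply cites it from Chan, Shi, and Song (2011), and the Chernoff/MGF argument with a case split on whether the optimizer $t^\star = a/(4\sigma^2)$ falls inside the admissible range $|t| \leq 1/(\sqrt{2}\,b_M)$ is essentially the standard proof given there; your constants and the matching of the two terms in the hypothesis to the two tail regimes all check out.
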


\begin{corollary}\label{cor:sum_of_eq_lap}
Let $Y_1,\dots,Y_k$ be independent variables with distribution $\Lap(b)$ and let $Y=\sum_{i=1}^k Y_i$. Let $0<\beta<1$. Then
\begin{align*}  \Pr\left[|Y|>2b\sqrt{2\ln(2/\beta)}\max\left\{\sqrt{k},\sqrt{\ln(2/\beta
    )}\right\} \right] \leq \beta.
\end{align*}
\end{corollary}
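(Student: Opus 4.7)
The plan is to derive Corollary~\ref{cor:sum_of_eq_lap} directly from Lemma~\ref{lem:sum_of_lap} by specializing the latter to the case where all scale parameters are equal. Concretely, I would set $b_i = b$ for every $i \in [k]$ in the statement of Lemma~\ref{lem:sum_of_lap}.

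With this choice, the two quantities appearing inside the $\max$ in the hypothesis on $\nu$ simplify: the maximum scale is $b_M = b$, and the root-sum-of-squares is $\sqrt{\sum_{i=1}^k b_i^2} = b\sqrt{k}$. Hence the hypothesis becomes $\nu > b \cdot \max\bigl\{\sqrt{k},\, \sqrt{\ln(2/\beta)}\bigr\}$. Choosing $\nu$ equal to this maximum (or, to remain strict, infinitesimally larger and then appealing to continuity of the probability bound in $\nu$), the conclusion of Lemma~\ref{lem:sum_of_lap} gives
\[
\Pr\!\left[|Y| > \nu \sqrt{8 \ln(2/\beta)}\right] \leq \beta.
\]

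The remaining step is a routine algebraic rewriting of the threshold: $\sqrt{8 \ln(2/\beta)} = 2\sqrt{2 \ln(2/\beta)}$, so
\[
\nu \sqrt{8 \ln(2/\beta)} = 2b \sqrt{2 \ln(2/\beta)} \cdot \max\bigl\{\sqrt{k},\, \sqrt{\ln(2/\beta)}\bigr\},
\]
which matches the threshold stated in the corollary.

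There is no real obstacle here; the only minor subtlety is that Lemma~\ref{lem:sum_of_lap} asks for strict inequality $\nu > \max\{\ldots\}$, while the corollary uses $\nu$ equal to the max. This is handled by taking a limit $\nu \downarrow b\max\{\sqrt{k}, \sqrt{\ln(2/\beta)}\}$ and using that the tail probability $\Pr[|Y| > t]$ is right-continuous in $t$, or equivalently by noting that the event $\{|Y| > t\}$ has the same probability as its limit from above for continuous distributions, which the sum of Laplace variables is.
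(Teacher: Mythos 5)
Your proposal is correct and matches the paper's proof, which is simply ``Apply Lemma~\ref{lem:sum_of_lap} to $b_1=\dots=b_k=b$.'' The extra remark about handling the strict inequality $\nu > \max\{\ldots\}$ via right-continuity of the tail probability is a reasonable clarification, though the paper treats it as too minor to mention.
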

\begin{proof}
Apply Lemma~\ref{lem:sum_of_lap} to $b_1=\dots=b_k=b$.
\end{proof}

\section{Omitted proofs}\label{sec:omitted_proofs}
\subsection{Omitted proofs from \Cref{sec:prelims}}
\label{sec:omitted_proofs_prelims}
\begin{proof}[Proof of Fact~\ref{fact:varialbe_shift}]
Given $x$, $x'$ and any set of output sequences set $S$, let $\mathcal{N}_S$ be the set of all choices of random variables $z$ such that $\Alg(x ;z)\in S$. Note that for all $z \in \mathcal{N}_S$, $\Alg(x'; q(z)) \in S$, i.e., for all $z' \in q(\mathcal{N}_S)$, $\Alg(x'; z') \in S$. Then 
\begin{align*}
    \Pr_{z\sim Z}[\Alg(x; z)\in S] &=\Pr_{z\sim Z}[z\in \mathcal{N}_S] \leq e^{\epsilon g(\tau-j)}\,\Pr_{z'\sim Z}[z'\in q(\mathcal{N}_S)] \\
    &\leq e^{\epsilon g(\tau-j)}\, \Pr_{z'\sim Z}[\Alg(x'; z')\in S].
\end{align*}
This completes the proof of \Cref{fact:varialbe_shift}.
\end{proof}

\begin{proof}[Proof of Lemma~\ref{lem:lap_shift}]
    Let $\densLap{b}$ denote the density function of $\Lap(b)$. For all $x,y\in\mathbb{R}$:
     \begin{align*}\densLap{b}(x)&=\frac{1}{2b}e^{-|x|/b} \leq \frac{1}{2b}e^{(|y|-|x+y|)/b} = e^{|y|/b}\frac{1}{2b}e^{(-|x+y|)/b} = e^{|y|/b}\densLap{b}(x+y),
    \end{align*}
    where the inequality follows from the triangle inequality $|x+y|\leq |x|+|y|$.
    In the following integral%
    , let $z'=q(z)$. Let $s$ be as defined in the lemma statement. Then we have
    \begin{align*}
        \Pr_{z\in Z}[z\in\mathcal{N}]&=\int\limits_{z\in\mathcal{N}}\prod_{i=1}^k \densLap{b_i}(z_i)dz\leq \int\limits_{z\in\mathcal{N}}\prod_{i=1}^k e^{|z'_i-z_i|/b_i} \densLap{b_i}(z'_i)dz\\
        & = e^s\int\limits_{z\in\mathcal{N}}\prod_{i=1}^k \densLap{b_i}(z'_i)dz = e^s\int\limits_{z\in q(\mathcal{N})}\prod_{i=1}^k \densLap{b_i}(z_i)dz\\
        &= e^s\Pr_{z\sim Z}[z\in q(\mathcal{N})]
    \end{align*}%
This completes the proof of \Cref{lem:lap_shift}.
\end{proof}

\subsection{Omitted proofs from \Cref{sec:warmup}}
\label{sec:omitted_warm_up_proof}
\begin{proof}[Proof of Lemma~\ref{lem:simple}]
{\bf Privacy.}
We use Fact~\ref{fact:varialbe_shift} and Lemma~\ref{lem:lap_shift} to argue the privacy of our simple algorithm, $\Alg_{\mathsf{simple}}$: Let $x$ and $x'$ differ at time $j$. 
Note that the prefix sums fulfill the following properties:
\begin{itemize}
   \item $\sum_{i=0}^t x_i=\sum_{i=0}^t x'_i$ for all $t<j$ and
   \item $\sum_{i=0}^t x'_i = \sum_{i=0}^t x_i+y$ for all $t \geq j$, where $y=x'_j-x_j\in[-1,1]$.
\end{itemize}
Let $S\subseteq\range(\Alg_{\mathsf{simple}})^*$. For any $\tau \geq j$, we consider the output distribution of length-$\tau$ prefixes. There are two cases: 
For $j=\tau$ and any $t\leq \tau$, the output on $x$ is given by $\sum_{i=1}^{t-1}x_i+Z_{t-1}$ with $t-1<j$, thus we get equal output distributions on $x$ and $x'$. That is, \begin{align*}\Pr[\Alg(x_1,\dots, x_{t})_{t=1}^{\tau}\in S]=\Pr[\Alg(x'_1,\dots,x'_t)_{t=1}^{\tau}\in S] =e^{g(0)\epsilon}\Pr[\Alg(x'_1,\dots,x'_t)_{t=1}^{\tau}\in S].\end{align*}

For $j<\tau$, let $Z=\randvar_1,\dots,\randvar_{\tau-1}$ be the sequence of Laplace random variables used by the algorithm.  For any fixed output sequence $s\in S$, note that the algorithm guarantees that $s_1 = 0$ and let $z=z_1,z_2,\dots,z_{\tau-1}$ be the values that the Laplace random variables need to assume to get output sequence $s$ with input $x$. That is, $\sum_{i=1}^{t} x_i + z_t=s_{t+1}$ for $t\geq 1$. We define a bijection $q$ satisfying the properties of Fact~\ref{fact:varialbe_shift} as follows. 
Define $q(z)=z'$ such that 
\begin{align*}
    z'_i = 
    \begin{cases}
    z_i & i \leq j \\
    z_i + y & i >j
    \end{cases}. 
\end{align*}

 This gives $\sum_{i=1}^{t} x'_i + z'_t=s_{t+1}$. All Laplace noises have the same distribution $\Lap(1/\epsilon)$. By Lemma~\ref{lem:lap_shift}, we have for any $\mathcal{N}\in\support(Z)$,  $$\Pr_{z\in Z}[z\in\mathcal{N}]\leq e^{(\tau-j)\epsilon}\Pr_{z\in Z}[z\in q(\mathcal{N})].$$ 

Thus, $q$ fulfills the properties of Fact~\ref{fact:varialbe_shift}, and therefore our algorithm satisfies $\epsilon$-differential privacy with expiration $g(t)=t$.

{\bf Accuracy.}
The error at time step $t$ is given by $x_t+\randvar_{t-1}$. By the Laplace tail bound (Fact~\ref{fact:laplace_tailbound}), we have that $$\Pr_{z_i\sim \Lap(1/\epsilon)}[|z_{i}| > \epsilon^{-1}\log(T/\beta)]\leq \beta/T$$ for any $i\in[T-1]$. 

Therefore, by a union bound, $|\randvar_{i}|\leq \log(T/\beta)$ simultaneously for all $i\in[T-1]$ with probability at least $1-\beta$. This implies that, with probability at least $1-\beta$, the maximum error over the entire stream is bounded by $\epsilon^{-1}\log(T/\beta)$. 
\end{proof}

\begin{proof}[Proof sketch of Fact~\ref{fact:dyadic_decomposition}]
For $a=b$, the claim is immediate and so we prove it for $a < b$.
First, consider the case where $a=2^{\ell}$ and $b\leq 2\cdot 2^{\ell} - 1$.
We show that in this case, there exists a set $D_{[a,b]}$ with the properties above, only it contains at most \emph{one} interval per level.
If $b=2\cdot 2^{\ell} - 1$, then $D_{[a,b]}=\{[a,b]\}$.
Else, we have $b-a+1<2^{\ell}$, thus there exists a binary representation $(q_0,\dots, q_{\ell-1})\in \{0,1\}^{\ell}$ such that $b-a+1=q_0+2q_1+2^2q_2+\dots+ 2^{\ell-1}q_{\ell-1}$.
Note that for any $j\leq \ell$, there is an interval in $\mathcal{I}$ of level $j$ starting at $a=2^{\ell}$.
We now show how to construct the set $D_{[a,b]}$.
In the first step, choose the largest $j$ such that $q_j=1$ and add the interval $[2^{\ell},2^{\ell}+2^j - 1]$ to $D_{[a,b]}$.
Then, set $q_j=0$.
Next, assume we already cover the interval $[a,s]$ for some $b\geq s>a$.
We then again pick the largest remaining $j$ such that $q_j=1$ and add the interval $[s,s+2^j - 1]$ to $D_{[a,b]}$.
By an inductive argument, this interval will be in $\mathcal{I}$, since $s$ is the ending position of an interval of a higher level in $\mathcal{I}$.
We do this until no more $q_j$ with $q_j=1$ remains, at which point we have covered $[a,b]$.
The same argument can be repeated for the case where $b=2^{\ell}-1$ and $a\geq 1$.
Now, for the general case, pick $\ell=\lfloor \log(b)\rfloor$ and $c=2^{\ell} - 1$ and define $D_{[a,b]}=D_{[a,c]}\cup D_{[c+1,b]}$. 
\end{proof}

\begin{proof}[Proof sketch of Fact~\ref{fact:dyadic_levels}]
First note that in our definition of $\mathcal{I}$, we exclude all intervals starting at $0$.
For a given $t$, let $\ell\in\mathbb{N}$ be the greatest interval level satisfying $2^\ell-1 < t$.
It follows that for every $j > \ell$, $t \in [0, 2^j - 1] \notin \mathcal{I}$.
Analogously, for every $j \leq \ell$ we have $t\notin [0, 2^j - 1]$, and therefore there must exist a unique $I\in\mathcal{I}^j$ such that $t\in I$.
Taken together, we have that $|\mathcal{I}_t| = \ell + 1$ where $\ell = \lfloor \log t \rfloor$.
\end{proof}

\subsection{Omitted Proofs from Section~\ref{sec:main_proof}}
\label{sec:missingproof}
In this section, we give the detail proof for privacy analysis for all $\lambda \in \mathbb R_{\geq 0}$ and accuracy analysis for $\lambda=3/2$, which is not covered in \Cref{sec:main_proof}. 

\subsubsection{Privacy proof for all $\lambda \in \mathbb R_{\geq 0}$}

\begin{proof}[Proof of Lemma~\ref{lem:mainalgdecay}]

We prove a slightly more general result; i.e., for all $\lambda \in \mathbb R_{\geq0}$, we have the following bound on the privacy expiration function:
\[
g(d) \leq \begin{cases}
    0 & \text{for } d < B \\
    2 \left(1 + {\bigg[\frac{1}{\lambda} \left((\log(d - B + 1) + 1)^{\lambda} - 1 \right) \bigg]}\right) & \text{for } B \leq d \text{ and } \lambda \neq 0 \\
    2\left(1 + \log\log(d-B+1)\right) & \text{for } B \leq d \text{ and } \lambda = 0
\end{cases}
\]
We wish to use \Cref{fact:varialbe_shift}. For that, we first show that both the requirements in \Cref{fact:varialbe_shift} are satisfied.

\begin{enumerate}
    \item By definition, $\Alg(x_1 x_2\dots x_t; z) = \sum_{i=1}^t x_i + \sum_{I\in \mathcal{I}_t} z_I$.
    For $t< j$, we have $t\notin [j,\tau']$  and therefore $t\notin I$ for all $I\in D_{[j,\tau']}$. It follows that for all $I \in \mathcal{I}_t$ it holds that $I \not\in D_{[j,\tau']}$.
    Therefore, we have 
    $$\sum_{i=1}^t x_i+\sum_{I\in \mathcal{I}_t} z_I=\sum_{i=1}^t x'_i+\sum_{I\in \mathcal{I}_t} z_I=\sum_{i=1}^t x'_i+\sum_{I\in \mathcal{I}_t} z'_I.$$
    
    For $t\geq j$, we have that $t$ is contained in exactly one $I\in D_{[j,\tau']}$.%
    Thus, $z'_I=z_I+y$ for exactly one $I\in\mathcal{I}_t$, and $z'_{I'}=z_{I'}$ for all $I'\in \mathcal{I}_t\backslash\{I\}$. 
    Further, since $t\geq j$, we have that $\sum_{i=1}^t x_i=\sum_{i=1}^t x'_i-y$. 
    Together, we have 
    \begin{align*}
        \sum_{i=1}^t x_i+\sum_{I\in \mathcal{I}_k} z_I &=\sum_{i=1}^t x'_i-y+\sum_{I\in \mathcal{I}_t} z'_I+y =\sum_{i=1}^t x'_i+\sum_{I\in \mathcal{I}_t} z'_I,
    \end{align*} 
    so the first property of Fact~\ref{fact:varialbe_shift} is fulfilled.

    \item By Fact~\ref{fact:dyadic_decomposition}, $D_{[j,\tau']}$ consists of at most two intervals from each of $\mathcal{I}^0,\dots,\mathcal{I}^{\ell^*}$ where $\ell^* = \lfloor \log(\tau' - j + 1)\rfloor$.
    Thus, by Lemma~\ref{lem:lap_shift} we have for any $\mathcal{N}\in \support(Z)$, then 
    \begin{align}
    \begin{split}
        \Pr_{z\in Z}[z\in\mathcal{N}] & \leq \exp\left(\sum_{\ell=0}^{\log T} \sum_{I\in D_{[j,\tau']}\cap \mathcal{I}^\ell} {|y| \epsilon \over  (1+\ell)^{1 - \lambda}}\right)\, \Pr_{z\in Z}[z\in q(\mathcal{N})] \\
        &\leq \exp\left(2 \epsilon \sum_{\ell=0}^{\ell^*} (1+\ell)^{\lambda - 1}\right)\, \Pr_{z\in Z}[z\in q(\mathcal{N})] \enspace .
    \end{split}
    \label{eq:privacyloss}
    \end{align}
\end{enumerate}
Using Fact~\ref{fact:varialbe_shift} with $d = \tau - j = B + \tau' - j$ we conclude differential privacy with privacy expiration $g(d) = 2 \sum_{\ell=0}^{\lfloor\log(d - B + 1)\rfloor} (1+\ell)^{\lambda - 1}$.
Next we bound:
\begin{align*}
    g(d) &= 2  \sum_{\ell=0}^{\lfloor\log(d - B + 1)\rfloor} (1+\ell)^{\lambda - 1}\\
    & \leq 2 \left(1 + \int\limits_{1}^{\log(d - B + 1)+1} x^{\lambda - 1}\, \mathsf dx\right)\\
    &= 2 \left(1 + \bigg[\frac{1}{\lambda} x^{\lambda}\bigg]_{x=1}^{x=\log(d - B + 1) + 1}\right) \\
    &= O( 1 + \log^{\lambda} (d-B+1)) \enspace .
\end{align*}

Note that we use the assumption $\lambda > 0$.

In the case $\lambda = 0$ it instead holds that $g(d) = O(\log\log(d-B+1))$ by setting the value of $\lambda=0$ in the first inequality and then using the bound on the Harmonic sum.

 This is not a point of discontinuity as the $O(\cdot)$ notation would imply. We treated $\lambda$ as a constant and derived the last equation in the above. If instead, we treat $\lambda$ as a parameter, we have the following
\[
g(d) \leq 2 \left(1 + \underbrace{\bigg[\frac{1}{\lambda} \left((\log(d - B + 1) + 1)^{\lambda} - 1 \right) \bigg]}_{f(\lambda)}\right)
\]

We can now evaluate the $f(\lambda)$ as $\lambda \to 0$ as follows: Since $f(\lambda)$ as an indeterminate form and noting that it satisfies the condition required to apply the L'H\^{o}pital rule. Therefore, an application of L'H\^{o}pital rule gives us 
\[
\lim_{\lambda \to 0} f(\lambda) = \lim_{\lambda \to 0} \frac{{\mathsf d \over \mathsf d \lambda}( (\log(d - B + 1) + 1)^{\lambda} - 1)}{{\mathsf d \over \mathsf d \lambda}\lambda} = \lim_{\lambda \to 0} \log(\log(d-B+1))\log^{\lambda}(d-B+1) = \log(\log(d-B+1)).
\]

That is, we have the desired bound on the privacy expiration. This completes the proof of \Cref{lem:mainalgdecay}.

\end{proof}

\subsubsection{Accuracy proof when $\lambda=3/2$}
Recall that in the main text, we stated that for $\lambda=3/2$, we achieve 
\[
\sum_{\ell=0}^{\lfloor\log(t-B)\rfloor}b_\ell^2 \leq 
\sum_{\ell=0}^{\log t} b_\ell^2 =
O\left({\log\log(t)\over \epsilon^2} \right). 
\]
This follows from straightforward computation by setting $\lambda=3/2$:
\[
\sum_{\ell=0}^{\log t} b_\ell^2  = \sum_{\ell=0}^{\log t} {(\ell + 1)^{2(1-\lambda)}\over \epsilon^2 } = \sum_{\ell=0}^{\log t} {1 \over \epsilon^2 (\ell + 1)}= O\left({\log\log(t)\over \epsilon^2} \right).
\]
While it seems like a point of discontinuity based on what we showed in \Cref{sec:main_proof}, we now show that it is not the case. Recall the following computation in \Cref{sec:main_proof} 
\begin{alignat*}{1}
    \sum_{\ell=0}^{\log t} b_\ell^2 & = \sum_{\ell=0}^{\log t} {(\ell + 1)^{2(1-\lambda)}\over \epsilon^2 }\\
    & \leq {1 \over \epsilon^2 } \left(1 + \int\limits_{1}^{\log(t)+2} x^{2(1-\lambda)} \mathsf dx\right)\\
    & \leq {1 \over \epsilon^2 } \left(1 + \bigg[\frac{1}{3 - 2\lambda} x^{3 - 2\lambda}\bigg]_{x=1}^{x=\log(t) + 2} \right)\enspace .
\end{alignat*}
Note that we did not include the last step. Now, setting the limits, we get 
\[
\sum_{\ell=0}^{\log t} b_\ell^2 \leq {1 \over \epsilon^2 } \left(1 + \frac{(\log(t) + 2)^{3 - 2\lambda} - 1}{3 - 2\lambda}  \right)
\]
Just as in the case of the privacy proof, we can compute the limit by applying L'H\^{o}pital rule to get the desired expression.

\subsection{Omitted Proofs from \Cref{sec:lowerbound}}
\label{sec:omitted_lowerbound}

\begin{proof}[Proof of Theorem~\ref{thm:lower}]
    We define $x^{(0)}=0^T$ to be the 0 stream of length $T$.
    Let $T'\leq T$ such that $2C$ divides $T'$, and $T-T'<2C$. Since $T\geq 2C$ we have $T'\geq 2C$ and $T'\geq T/2$. We define $T'/(2C)$ data sets $x^{(1)}, \dots, x^{(T'/(2C))}$ as follows:
    \begin{align*}
        x^{(i)}_t := \begin{cases}
            1 & t\in[2C(i-1)+1,2Ci] \\
            0 & \text{otherwise}
        \end{cases}.
    \end{align*}
    For $i\in[T'/(2C)]$, let $S_i$ be the set of all output sequences $a_1,\dots,a_T\in\range{(\Alg)}$ satisfying $a_{2Ci}>C$, and $a_{2Ck}<C$ for all $k<i$. Note that $S_i\cap S_j=\emptyset$ for $i\neq j$ where $i,j\in[T'/(2C)]$. Further, since by assumption $\Alg$ has error at most $C$ at all times steps with probability at least $2/3$, we have that
    \[\Pr[\Alg(x^{(i)}_1,\dots, x^{(i)}_t)_{t=1}^{2Ci}\in S_i]\geq 2/3.\]
     Recall that $x^{(0)}$ and $x^{(i)}$%
     differ in exactly the positions $t\in[2C(i-1)+1,2Ci]$. By the assumption that $\Alg$ satisfies $\epsilon$-differential privacy with expiration $g$,
    \begin{align*}
        2/3 &\leq \Pr[\Alg(x^{(i)}_1,\dots,x^{(i)}_t)_{t=1}^{2Ci}\in S_i] \\&\leq e^{\sum_{j=0}^{2C-1}g(j)\epsilon}\Pr[\Alg(x^{(0)}_1,\dots,x^{(0)}_t)_{t=1}^{2Ci}\in S_i]\\
        &=e^{\sum_{j=0}^{2C-1}g(j)\epsilon}\Pr[\Alg(x^{(0)}_1,\dots,x^{(0)}_t)_{t=1}^{T'}\in S_i]
    \end{align*}
Now, since the $S_i$'s are disjoint, we have 
\begin{align*}
    1 &\geq \sum_{i=1}^{T'/(2C)}\Pr[\Alg(x^{(0)}_1,\dots,x^{(0)}_t)_{t=1}^{T'}\in S_i]\\&\geq \sum_{i=1}^{T'/(2C)}(2/3)e^{-\sum_{j=0}^{2C-1}g(j)\epsilon} \\
    &=(T'/(2C))(2/3)e^{-\sum_{j=0}^{2C-1}g(j)\epsilon}.
\end{align*}
It follows that 
\[\sum_{j=0}^{2C-1}g(j)\epsilon\geq \log(2T'/(6C))\\\geq \log (T/6C)\]
completing the proof of \Cref{thm:lower}.
\end{proof}

\end{document}